\newcommand{\R}{\mathbb{R}}
\newcommand{\ve}{\varepsilon}
\newcommand{\rmd}{{\rm d}}
\newcommand{\eee}{equation}
\newcommand{\be}{\begin{\eee}}
\newcommand{\ee}{\end{\eee}}
\numberwithin{equation}{section}
\newtheorem{lemma}{Lemma}[section]
\newtheorem{prop}[lemma]{Proposition}
\newtheorem{theorem}[lemma]{Theorem}
\newtheorem{cor}[lemma]{Corollary}
\theoremstyle{definition}
\newtheorem{remark}[lemma]{Remark}
\newtheorem{example}[lemma]{Example}
\title{Singular vortex pairs follow magnetic geodesics}
\author{Theodore D. Drivas}
\address{Department of Mathematics, Stony Brook University, Stony Brook, NY, 11790}
\email{tdrivas@math.stonybrook.edu}
\author{Daniil  Glukhovskiy}
\address{Department of Mathematics, Stony Brook University, Stony Brook, NY, 11790}
\email{daniil.glukhovskiy@stonybrook.edu }
\author{Boris Khesin}
\address{Department of Mathematics, University of Toronto, ON M5S 2E4, Canada}
\email{khesin@math.toronto.edu}
\begin{document}
\maketitle

\vspace*{-10mm}
\begin{abstract}
We consider pairs of point vortices having circulations $\Gamma_1$ and $\Gamma_2$ and confined to a two-dimensional surface $S$.  In the limit of zero initial separation $\ve$, we prove that they follow a magnetic geodesic in unison, if properly renormalized.  Specifically, the ``singular vortex pair" moves as a single charged particle on the surface with a charge of order $1/\ve^2$ in a magnetic field $B$ which is everywhere normal to the surface and of strength  $|B|=\Gamma_1 +\Gamma_2$.   In the case $\Gamma_1=-\Gamma_2$, this gives another proof of Kimura's conjecture \cite{kimura} that singular dipoles follow geodesics.   
\end{abstract}
\vspace*{2mm}
\maketitle

\section{Introduction}

One of the most classical areas of hydrodynamics is the study of the motion of point vortices in 2D ideal fluids. 
In this paper we are describing the limiting motion of a vortex pair on arbitrary surfaces. Let $S$ be a closed surface
embedded in $\R^3$ with an induced area form $\mu_S$. (For most of our considerations one does not need the embedding and can trace the vortex motion on an abstract 2D surface, but we assume it now to simplify the introduction.)
Consider a pair of vortices located at $z_1, z_2\in S$ and the singular vorticity form
\be\label{2vortex}
	\omega (x)=  \Gamma_1\delta_{z_1(t)}(x) + \Gamma_2\delta_{ z_2(t)}(x) - \frac{\Gamma_1+\Gamma_2}{{\rm Area}(S)}\,.
\ee
Here the constant term is understood as a multiple of the area form, $(\Gamma_1+\Gamma_2)\mu_S/{\rm Area}(S)$, which
ensures that the vorticity form is exact, i.e. it has zero total integral over $S$. (By abusing the notation we omit $\mu_S$ below.) Vortex positions evolve in time according to the Kirchoff--Helmholtz equations, see \S \ref{vortexsystem}. We are interested in the dynamical properties of the vortices as the width of the vortex pair tends to zero:
\be
\|z_1(0)- z_2(0)\| =\ve \to 0\,,
\ee
where $\|\cdot \|$ is the distance in $\R^3$.
In the case of a dipole $\Gamma_1=-\Gamma_2$, Kimura conjectured (and proved for surfaces of constant curvature) that \textit{when the width tends to zero the (singular) dipole moves along a geodesic on the surface} \cite{kimura}.  As such, Kimura noted that singular vortex dipoles are ``curvature checkers". The general case  was proved by Boatto and Koiller \cite{boatto2015} using Gaussian geodesic coordinates (see also  \cite{grotta2023interplay}) and  by Gustafsson \cite{gustafsson2022vortex} using complex analytic techniques.  

In this paper, we extend this picture to more general \textit{vortex pairs with different circulations}.  
We prove that singular vortex pairs evolve as a charged  particle confined to $S$ solely 
under the influence of a magnetic field of constant strength proportional to the sum $\Gamma_1+\Gamma_2$ and 
everywhere normal to the surface.  As such, general  vortex pairs  follow magnetic geodesics. Only in the case of the exact dipole  $\Gamma_1=-\Gamma_2$ does the magnetic effect disappear and it follows the ordinary geodesic.  

Recall that a curve $X: \mathbb{R}\to S$ parametrized by $t\in \R$ is a {\it magnetic geodesic} if it is a solution of the following
\textit{magnetic geodesic equation}:
      \begin{align}\label{geodesiceqn}
 \ddot{X }(t)&=    \mathrm{I\!I}_{X(t)} ( \dot{X}(t),\dot{X}(t)) +  \mathsf{q} \, \dot{X}(t) \times    B( X(t))
  \end{align}
  where $ \mathrm{I\!I}_{x} ( v,u)$ is the second fundamental form of the surface $S\subset \R^3$, a constant $\mathsf{q}$
  is the charge, and $B$  is the magnetic vector field, which is everywhere orthogonal to the surface.
  The geodesic is defined by setting the initial conditions $X(0)$ and $\dot{X}(0) $.

Our main result is the following statement:

\begin{theorem} \label{mainthm}
Let $z_1^\ve(t)$ and $z_2^\ve(t)$ be two point vortices on $S$ such that $\|z_1^\ve(0)- z_2^\ve(0)\| =\ve$. 
 The paths $\{z_i^\ve(t) \}_{t\in \mathbb{R}} $ have following asymptotics: for any fixed moment $t\in \mathbb{R}$ they tend to the corresponding magnetic geodesic as $\ve\to 0$, provided that renormalizations of $\Gamma_i$ enforce 
   \be\label{scalingrequirements}
 \frac{\Gamma_2- \Gamma_1}{\ve}= O(1) \qquad \text{and} \qquad \frac{\Gamma_1 +\Gamma_2}{\ve^2}    = O(1) \qquad \text{as} \qquad \ve\to 0.
  \ee
 Namely, 
there is a family of curves $X_\ve:\mathbb{R}\to S$ such that for $i=1,2$
$$
\| {z }_{ i}^\ve(t)-X_\ve(t)\|\to 0\quad \text{as} \quad \ve\to 0
$$ 
where  $X_\ve(t)$ solves the magnetic geodesic equation \eqref{geodesiceqn} for the  magnetic field 
\be
B(x) =({\Gamma_1 +\Gamma_2}) \hat{n}_S( x),
\ee
normal to the surface $S$
and the charge   $\mathsf{q}=\frac{1}{2\pi \ve^2}$. The initial conditions at $t=0$
 are 
\be
X_\ve(0)= \mathsf{p}, \qquad \dot{X}_\ve(0) =  \left(\frac{\Gamma_2- \Gamma_1}{2\ve}\right) \mathsf{w}\times \hat{n}_S(\mathsf{p}),
\ee
where  $\mathsf{p} = \lim_{\ve \to 0} \frac{1}{2}(z_1^\ve(0)+z_2^\ve(0))$ is the midpoint of the vortex pair and $\mathsf{w}= \lim_{\ve \to 0} \frac{z_1^\ve(0)-z_2^\ve(0)}{\|z_1^\ve(0)-z_2^\ve(0)\|}$ is the initial unit separation vector.  
 \end{theorem}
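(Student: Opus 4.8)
The plan is to reduce the Kirchhoff--Helmholtz system to a coupled pair of equations for the midpoint and the rescaled separation of the two vortices, to read off the leading-order balance as $\ve\to0$ from the near-diagonal expansion of the Green's function, and finally to recognize the limiting midpoint dynamics as cyclotron motion constrained to $S$, i.e. a magnetic geodesic. First I would write the equations of motion explicitly. Let $G(x,y)$ be the Green's function of the Laplace--Beltrami operator on $S$ normalized as in \eqref{2vortex}, and split it near the diagonal as $G(x,y)=\frac{1}{2\pi}\log\frac{1}{d(x,y)}+g(x,y)$, where $d$ is the geodesic distance and $g$ is the smooth regular part (whose diagonal value is the Robin function). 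In these terms the Kirchhoff--Helmholtz equations for the pair take the form
\[
\dot z_i=\Gamma_j\,\nabla^\perp_x G(x,z_j)\big|_{x=z_i}+\Gamma_i\,\nabla^\perp_x g(x,z_i)\big|_{x=z_i},\qquad\{i,j\}=\{1,2\},
\]
with $\nabla^\perp$ the surface symplectic gradient and an additional harmonic background from the mean-field term in \eqref{2vortex}. I would then pass to the midpoint $\bar z=\tfrac12(z_1+z_2)$ and the separation $w=z_1-z_2$, rescaling $w=\ve\,\eta$ so that $|\eta|$ is of order one. The scalings \eqref{scalingrequirements} force $\Gamma_i=O(\ve)$ and $\Gamma_1=-\Gamma_2+O(\ve^2)$, so to leading order the configuration is an exact dipole: the two vortex velocities agree up to $O(\ve)$, whence $\bar z$ moves at order one while $w$ merely rotates.

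Next I would carry out the near-diagonal expansion. The singular part $\frac{1}{2\pi}\log\frac1d$ contributes the dominant $O(1)$ velocities, giving, up to the fixed constant of $G$,
\[
\dot{\bar z}\;\sim\;\frac{\Gamma_1-\Gamma_2}{\ve}\,\eta^{\perp},\qquad \dot\eta\;\sim\;\frac{\Gamma_1+\Gamma_2}{\ve^2}\,\eta^{\perp},
\]
while the regular part $g$, the mean-field background, the self-interaction, and the curvature corrections to $d$ enter only at order $\ve$. The second equation (read covariantly along $\bar z$) shows that $\eta$ rotates rigidly at the order-one intrinsic rate $\Omega=(\Gamma_1+\Gamma_2)/(2\pi\ve^2)=\mathsf q|B|$, so $\dot\eta\perp\eta$ and $|\eta|$ — the physical separation — is preserved to leading order. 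The first equation shows that $\dot{\bar z}$ has constant magnitude with direction $\eta^\perp$ co-rotating with $\eta$. Thus the midpoint moves at constant speed along $S$ with a velocity that turns at exactly the cyclotron frequency $\mathsf q|B|$.

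The identification with \eqref{geodesiceqn} then follows by differentiating $\dot{\bar z}$ and splitting into tangential and normal parts. The tangential acceleration equals $\mathsf q\,\dot{\bar z}\times B$ because the velocity rotates about $\hat{n}_S$ at rate $\mathsf q|B|$, while the normal acceleration is forced to be $\mathrm{I\!I}_{\bar z}(\dot{\bar z},\dot{\bar z})$ purely because each vortex — hence the midpoint in the limit, since $\dot z_1\approx\dot z_2\approx\dot{\bar z}$ — is constrained to $S$. Together these reproduce \eqref{geodesiceqn}. Evaluating the leading expressions at $t=0$ gives $\bar z(0)\to\mathsf p$ and a velocity proportional to $\mathsf w\times\hat{n}_S(\mathsf p)$ of magnitude of order $(\Gamma_2-\Gamma_1)/\ve$, matching the stated initial data. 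The exact dipole $\Gamma_1=-\Gamma_2$ is recovered by setting $\Omega=0$: then $\eta$ is frozen and the midpoint follows an ordinary geodesic, giving Kimura's conjecture.

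Finally, to make this rigorous I would let $X_\ve$ solve \eqref{geodesiceqn} with the prescribed initial data and estimate $e(t)=\bar z^\ve(t)-X_\ve(t)$ by Gronwall on a fixed interval $[0,T]$. Since every rate is order one under \eqref{scalingrequirements}, the comparison has an $\ve$-independent Lipschitz constant, and the $O(\ve)$ discrepancy between the true velocity field and its leading part yields $\|e(t)\|=O(\ve)$ uniformly on $[0,T]$; combined with $\|z_i^\ve-\bar z^\ve\|=\tfrac12\|w\|=O(\ve)$ this gives the claimed convergence. The main obstacle is analytic rather than dynamical: one must establish the near-diagonal asymptotics of $G$ and of $\nabla^\perp_x G$ on a general embedded surface, with uniform control — along the trajectory and in the base point — of the regular part and of the curvature corrections, so that the second fundamental form emerges at the correct order; and one must run a continuation argument keeping $|\eta|$ bounded away from $0$ and $\infty$ (conservation of the vortex-pair energy, which to leading order pins $\log\|w\|$, assists here), so that no collision occurs and the expansion stays valid on all of $[0,T]$.
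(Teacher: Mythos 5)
Your proposal is correct in strategy and runs essentially parallel to the paper's analytical proof: both pass to midpoint/separation variables, isolate the $\log$-singularity of $G_S$ from its $C^{2,\alpha}$ regular part, and identify the leading balance in which the separation rotates at the cyclotron rate $(\Gamma_1+\Gamma_2)/(2\pi\ve^2)$ while the midpoint advects with velocity proportional to the rotated separation. Two places where your route genuinely differs are worth recording. First, you obtain the normal component $\mathrm{I\!I}_{X}(\dot X,\dot X)$ of the acceleration ``softly,'' from the constraint that the pair lies on $S$; the paper instead extracts it explicitly from the term $z_{\rm rel}\times(\dot z_{\rm abs}\cdot\nabla\hat{n}_S)$ via the identity $\nabla\hat{n}_S=\mathbf{P}\,{\rm Hess}f/\|\nabla f\|$ (Lemma \ref{gradnormlem}). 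Your version is more geometric, but the midpoint is only $O(\ve^2)$-close to $S$, so making it rigorous still requires two of the paper's quantitative inputs: the cancellation $z_{\rm rel}\cdot\hat{n}_S(z_{\rm abs})=O(\ve^3)$ (Lemma \ref{normaltangentlemma}) and control of two time derivatives of $z_{\rm rel}$; relatedly, your ``$\eta$ rotates rigidly'' picture omits the $O(1)$ out-of-tangent-plane term $(\Gamma_2-\Gamma_1)z_{\rm rel}\times(z_{\rm rel}\cdot \nabla\hat{n}_S)/(2\pi w^2)$ appearing in \eqref{zrel1}, which is of the same order as the rotation and must be shown (as the paper does) to feed only into the error. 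Second, you close by Gronwall in physical time with the rescaled separation $\eta=w/\ve$, invoking energy conservation to prevent collision, whereas the paper reparametrizes time by $\tau=\int_0^t \ve^2/w^2$; these are interchangeable, but note that energy conservation only pins $\log w$ to $O(1)$, i.e.\ $w\in[c\ve,C\ve]$, which does not by itself fix the limiting speed $\sim|\Gamma_1-\Gamma_2|/w$ --- what is really needed is the finer fact $\dot w=O(\ve^3)$, coming from the exact perpendicularity to $z_{\rm rel}$ of the leading terms of $\dot z_{\rm rel}$, which you do also invoke. With these two points supplied, your outline fills in to the paper's proof.
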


	\begin{figure}[h!]
		\centering
			\includegraphics[width=0.65\textwidth]{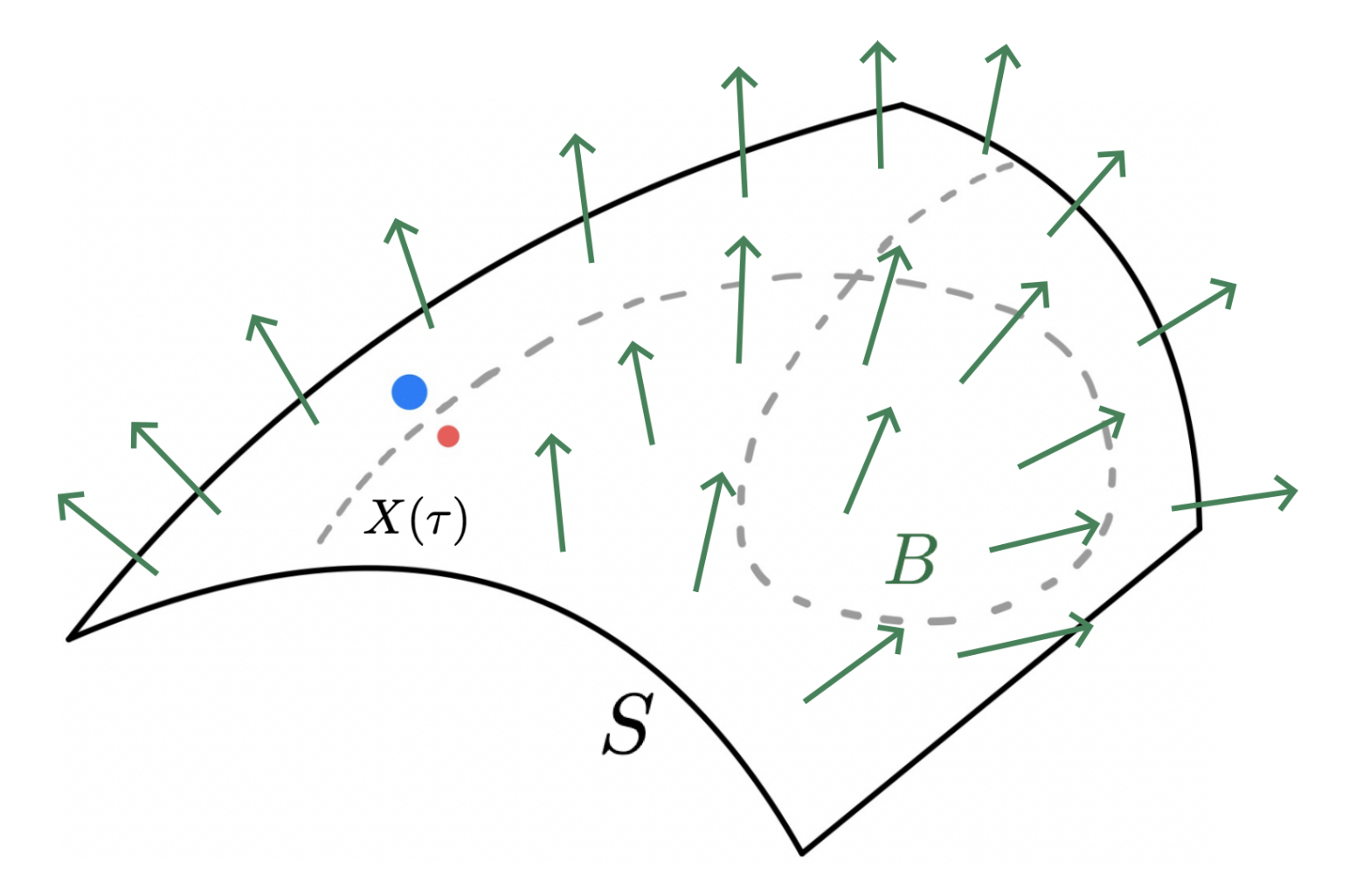} 
					\caption{An asymmetric vortex pair follows a magnetic geodesic on a surface.} 
	\end{figure}

  \begin{cor}
For a dipole, $\Gamma_1=- \Gamma_2$, the magnetic field vanishes and Eqn. \eqref{geodesiceqn} describes the geodesics on $S$. Thus, in the limit $\ve\to 0$, dipoles move along the geodesics on the surface $S$.
  \end{cor}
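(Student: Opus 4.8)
The plan is to read the corollary off Theorem~\ref{mainthm} by specializing to the balanced case $\Gamma_1=-\Gamma_2$; the only substantive point is to check that the magnetic geodesic equation \eqref{geodesiceqn} degenerates to the classical geodesic equation when the total circulation vanishes. First I would set $\Gamma_1+\Gamma_2=0$, so that the magnetic field $B(x)=(\Gamma_1+\Gamma_2)\hat{n}_S(x)$ is identically zero and the Lorentz term $\mathsf{q}\,\dot X\times B$ drops out of \eqref{geodesiceqn}. Crucially this happens \emph{irrespective} of the value of the charge $\mathsf{q}=\frac{1}{2\pi\ve^2}$: although $\mathsf{q}\to\infty$ as $\ve\to0$, it multiplies the vanishing field $B$, so the product stays zero. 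What survives is
\be
 \ddot X(t)=\mathrm{I\!I}_{X(t)}\bigl(\dot X(t),\dot X(t)\bigr).
\ee

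Next I would invoke the Gauss formula for the embedding $S\subset\R^3$: for any curve in $S$ the ambient acceleration splits as $\ddot X=\nabla_{\dot X}\dot X+\mathrm{I\!I}_{X}(\dot X,\dot X)$, where $\nabla$ is the Levi-Civita connection of the induced metric (the tangential part) and $\mathrm{I\!I}_X(\dot X,\dot X)$ is the normal part. Hence the reduced equation above is equivalent to $\nabla_{\dot X}\dot X=0$, which is exactly the geodesic equation; every solution $X_\ve$ is therefore an affinely parametrized geodesic of $S$. I regard this geometric identification as the only real content beyond citing the theorem, and it is essentially a definition-check.

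Finally I would verify that Theorem~\ref{mainthm} genuinely applies along a nontrivial family. The second condition in \eqref{scalingrequirements}, $\frac{\Gamma_1+\Gamma_2}{\ve^2}=O(1)$, holds automatically since the numerator is $0$; to retain the first condition $\frac{\Gamma_2-\Gamma_1}{\ve}=O(1)$ with a nonzero limit one renormalizes $\Gamma:=\Gamma_2=-\Gamma_1$ so that $\Gamma/\ve$ tends to a finite constant, producing the limiting initial velocity $\dot X_\ve(0)=\frac{\Gamma_2-\Gamma_1}{2\ve}\,\mathsf{w}\times\hat{n}_S(\mathsf{p})$ and thus a genuine, non-constant geodesic. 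Applying Theorem~\ref{mainthm} verbatim then gives a family $X_\ve$ with $\|z_i^\ve(t)-X_\ve(t)\|\to0$ for each fixed $t$, each $X_\ve$ a geodesic through $\mathsf{p}$ with the prescribed direction, which is the assertion. The nearest thing to an obstacle is simply keeping track that the blow-up of $\mathsf{q}$ is harmless because it is killed by $B\equiv0$.
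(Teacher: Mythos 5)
Your argument is correct and coincides with the paper's (implicit) reasoning: the corollary follows immediately from Theorem~\ref{mainthm} by setting $\Gamma_1+\Gamma_2=0$, which annihilates $B$ (and hence the Lorentz term, regardless of the blow-up of $\mathsf{q}$) and reduces \eqref{geodesiceqn} to $\ddot X=\mathrm{I\!I}_X(\dot X,\dot X)$, i.e.\ the geodesic equation by the Gauss formula. Your extra checks on the scaling conditions \eqref{scalingrequirements} are sound but not needed beyond what the theorem already assumes.
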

 
Theorem \ref{mainthm} can be understood as follows. Two vortices at a distance of order $\ve$  move each other with the velocity of order $\Gamma/\ve$. So the initial speed of the limiting magnetic geodesic is also of order $\Gamma/\ve$. 
 At the time of order $O(1)$ they could diverge at the distance of order $\Gamma/\ve$. By aligning their initial velocities this 
 distance is made of order $O(1)$. The theorem says that if, in addition, the acceleration satisfies the magnetic geodesic equation, then
 this distance will be of order $\ve$, i.e. given by the next term in the $\ve$-expansion. In this context, this is equivalent to the statement that  the latter distance goes to zero as $\ve\to 0$.
\newpage
 
 It is interesting to compare this setting with the gyroscope motion described by Cox and Levi \cite{Cox2016GaussianCA}. 
 The averaged motion of a gyroscope follows a magnetic geodesic in a magnetic field proportional to the Gaussian curvature of the surface $S$. Here, for the vortex pair, the magnetic field is constant and proportional to the sum of vortex strengths, $\Gamma_1+\Gamma_2$.

   \begin{remark}[Units]\label{units}  Note that circulation has units of (vorticity)$\times$(area) = $\mathsf{L}^2/\mathsf{T}$.   The units of the magnetic field (setting mass $\mathsf{m}=1$) are  $1/(\mathsf{T}^2 \mathsf{Amp})=1/(\mathsf{T}\mathsf{C})$ where $\mathsf{C}$ denotes one unit of charge $\mathsf{q}$. In our ``magnetic dictionary"
   the coulomb charge units translate to the inverse area: $\mathsf{C}\equiv1/\mathsf{L}^2$. The quantity $\mathsf{q}B$ has units $1/\mathsf{T}$.  
       \end{remark}
       
   \begin{remark}[Scalings]\label{scaling} 
  In order for the initial speed and  $\mathsf{q}B$ to be finite as $\ve \to 0$, one requires \eqref{scalingrequirements}.
This could be accomplished, for instance, by taking  
\be\label{Gammascaling}
\Gamma_i = c_i \mathsf{v_0} \ve
\ee
where $\mathsf{v_0}$ is fixed with units of velocity and $c_1$ and $c_2$ are dimensionless functions of $\ve$ satisfying 
  \be\label{cscale}
  c_1=c_1(\ve) =  \frac{\mathsf{c}}{2} \ve+1, \qquad   c_2=c_2(\ve) =  \frac{\mathsf{c}}{2} \ve-1.
  \ee
  One should imagine here a limit where circulations are scaled down in proportion to the inter-vortex distance, while simultaneous becoming closer to a vortex dipole at a rate which maintains a finite effect of the Lorentz force.  
       \end{remark}

\begin{example}[Motion on the plane]
{\rm
On the plane, two point vortices with circulations $\Gamma_1$ and $\Gamma_2$ sitting at $z_1$ and $z_2$ rotate about the center of vorticity located at
$z_c= (z_1 \Gamma_1 + z_2 \Gamma_2)/(\Gamma_1 + \Gamma_2)$ with angular velocity 
$\xi = (\Gamma_1 + \Gamma_2)/(2\pi \ve^2)$, where $\ve$ is the pair width $\ve=\|z_1 - z_2\|$, see e.g. \cite{newton2001n}.
In the magnetic field $B$ the angular velocity of a particle of mass $\mathsf{m}$ and charge $\mathsf{q}$ is 
$\xi = \mathsf{q}B/\mathsf{m}$, i.e. $B$ is proportional to $\xi$. For $\Gamma_1=-\Gamma_2$, we obtain $ B=0$ and hence we get a standard geodesic (a straight line) for a pure vortex dipole.   To keep the circular orbit having finite non-zero radius as $\ve \to 0$, we may scale $\Gamma_i$ as in \eqref{Gammascaling}, \eqref{cscale}.  In this case, the limit is just circular motion around point
$
z_c^* = \frac{1}{\mathsf{v_0}} \mathsf{w} + \mathsf{p}
$
with angular velocity 
$
\xi =  \frac{\mathsf{v_0}}{2\pi},
$
  where $\mathsf{w}$ is the (normalized) initial separation vector of the vortex pair.
This corresponds to a charged particle in a magnetic field of strength proportional to $\xi$, initial position $\mathsf{p}$ and initial velocity $\mathsf{w}^\perp$.
}
\end{example}

  \vspace{4mm}


\section{Point vortex motion on surfaces}
\label{vortexsystem}

Let $S$ be a closed surface of genus $\mathfrak{g}=0$ embedded in $\mathbb{R}^3$. See Remark \ref{homologicalpoint} for more general surfaces, which require consideration of evolving harmonic velocities. We recall the main properties of  the motion of point vortices on such surfaces,  see e.g. \cite{boatto2015, kimura1987vortex,sakajo2016point,dritschel2015motion,grotta2023interplay}, as well as \cite{newton2001n} for a comprehensive review. The evolution equations for such point vortices were first derived by Helmholtz as a finite-dimensional approximation of a two-dimensional ideal fluid. They describe the motion of vorticity that is a finite linear combination of Dirac $\delta$-functions located at points  $z_i \in S$ and  carrying  circulations $\Gamma_i\in \mathbb{R}$ in a constant background vorticity field.  Specifically, 
\be\label{vortdist}
\omega(x,t) = \sum_{i=1}^N \Gamma_i \Big(\delta_{z_i(t)}(x)  - \frac{1}{{\rm Area}(S)}\Big).
\ee
The  constant  vorticity  $ \frac{1}{{\rm Area}(S)}  \sum_{i=1}^N \Gamma_i$  ensures the mean-zero condition:
$
\iint_S \omega(t)  \mu_S =0,
$
as required for $\omega$ to represent the curl of a vector field.  

Let $\tilde{\Delta}$ denote the Laplace-Beltrami operator on $S$ with respect to the metric induced by embedding of $S$ into $\mathbb{R}^3$. The Green function  $G_S$ for $\tilde{\Delta}$ is characterized by 
\newpage
\begin{align}\label{greensfunction}
\tilde{\Delta}_p G_S(p,q) &= \delta_q(p) - \frac{1}{{\rm Area}(S)},\\
 G_S(p,q)&= G_S(q,p),\\
 \iint_S G_S(p,q) &\  \mu_S(q)  =0 ,\\
  G_S(p,q)\ + \ &\frac{1}{2\pi}\log d_S(p,q)\qquad \text{is bounded}
\end{align}
where $d_S(p,q)$ is the geodesic distance with respect to the induced metric and   $\mu_S(q)$ is the induced area form on $S$, see \cite{boatto2015, okikiolu2008negative,flucher1997vortex}.  Such $G_S$ is the kernel for the integral operator solving Poisson's equation 
\be
\Delta^{-1} f(p) = \iint_S G_S(p,q) f(q)  \mu_S(q) .
\ee
To write the equations of motion for the vortices, we must introduce a regular part of $G$ (sometimes called the \textit{Robin function} in the literature)
\be
R_S(p, q) := 
\begin{cases}
	G_S(p, q) + \frac{\log d_S(p,q)}{2 \pi} & p\neq q\\
	\lim_{p \to q} \big[ G_S(p, q) + \frac{\log d_S(p,q) }{2 \pi}\big] &  p = q
\end{cases}.\label{robinfunction}
\ee
Vortex dynamics for $N$ vortices \eqref{vortdist} are then Hamiltonian with the Hamiltonian function 
\be\label{pairHamilton}
H(z_1, \dots, z_N) = \sum_{ 1\leq i< j \leq N} \Gamma_i \Gamma_j G_S(z_i, z_j) + \frac{1}{2}\sum_{\ell=1}^N  \Gamma_\ell^2 R_S(z_\ell,z_\ell),
\ee
and the symplectic form given by the weighed combination of the induced area form $\mu_S$ on $S$:
\be
\Omega(z_1, \dots, z_N) =  \sum_{\ell=1}^N \Gamma_\ell  \mu_S(z_\ell).
\ee
For a pair of vortices, 
the Hamiltonian is simply 
\be
H(z_1,z_2)=\Gamma_1 \Gamma_2 G_S(z_1,z_2) + \frac{1}{2}\Gamma_1^2 R_S(z_1,z_1) + \frac{1}{2}\Gamma_2^2 R_S(z_2,z_2).
\ee 
 If $J$ represents the almost complex structure on $S$ (rotation by ninety degrees in the tangent plane), the equations of motion for $z_1$ and $z_2$ read
$$
\Gamma_i \dot{z }_i(t) = J(z_i) \tilde{\nabla}_iH (z_1(t), z_2(t))\quad\text{for}\,\, i=1,2
$$
where $\tilde{\nabla}$ is the covariant derivative associated to  the induced metric on the surface.  

\begin{remark}[Surfaces of genus $\mathfrak{g}\geq 1$]\label{homologicalpoint}
The Hamiltonian system \eqref{pairHamilton} is the correct description only if the surface has trivial homology (genus $\mathfrak{g}=0$). If $\mathfrak{g}\geq 1$, then the harmonic component ($2\mathfrak{g}$ degrees of freedom) must be simultaneously evolved, and contribute to the evolution of the vorticity, see \cite{fluidcohomology} and \cite[\S 2.2]{drivas2023singularity}.  Recent work  \cite{grotta2023interplay} derives the correct Hamiltonian  point vortex dynamics which includes the fluid cohomology. However, since the velocity of the harmonic part is of lower order as the width tends to zero, it does not effect the result that the singular pair follows a magnetic geodesic. Thus, for simplicity or presentation, we omit this coupling.
\end{remark}

As described in the introduction, we will prove that in an appropriate coincidence limit of two vortices, the motion is that of magnetic geodesics on $S$.  For clarity and completeness, we now briefly review standard geodesic motion on $S$ before proceeding to our main result.


\section{Review: Geodesic motion on a surface}
\label{geodesicmotion}

Let us describe the motion of a geodesic on a immersed codimension-1 submanifold $S$ of $\mathbb{R}^n$ 
\be\label{dipolesurf}
S  := \{ x\in \mathbb{R}^n \ : \ f(x) =0\}
\ee
defined by the non-degenerate zero level set of  a  function $f:\mathbb{R}^n\to \mathbb{R}$. The unit normal
\be
 \hat{n}_S(x) := \frac{\nabla f(x)}{\|\nabla f(x)\|}
\ee
is well defined everywhere in a tubular neighborhood of the surface $S$.
The geodesic motion is that of an ideal mechanical particle, i.e. there is no friction force acting on the particle moving along the submanifold.
This mechanical system for a particle of mass $\mathsf{m}$ is defined by Newton's equations
\begin{align}\label{heb}
\mathsf{m} \ddot{X}(t) &=  \lambda(t) \nabla f(X(t)),\\
({X},\dot{X})|_{t=0}&= (X_0,V_0)\in T_{X_0}S, \label{hee}
\end{align}
where $\lambda = \lambda(t)$ is determined in order to satisfy the  constraint equation
\be \label{constaints}
 f (X(t)) =  f (X(0)).
\ee
Equations \eqref{heb}--\eqref{hee} subject to the constraints \eqref{constaints}  describe the geodesic motion on $S$.

\begin{lemma} 
The Lagrange multipliers $(t)$ are given by the formula
\be\label{lambdaform}
\lambda(t) :=-\frac{({\rm Hess} \ f )|_{X(t)}( \dot{X}(t),\dot{X}(t))}{\|\nabla f(X(t))\|^2}.
\ee
\end{lemma}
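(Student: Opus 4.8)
The plan is to exploit the constraint directly. Since $f(X(t))=f(X(0))$ for all $t$, the scalar function $t\mapsto f(X(t))$ is constant, so every $t$-derivative of it vanishes identically. The Lagrange multiplier is then recovered by differentiating this identity twice and using Newton's equation \eqref{heb} to eliminate $\ddot{X}$. First I would differentiate once: by the chain rule
\[
\frac{d}{dt} f(X(t)) = \nabla f(X(t)) \cdot \dot{X}(t) = 0,
\]
which merely records that the velocity stays tangent to $S$. This is not needed for the formula itself, but confirms the setup is consistent.

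The key step is the second differentiation. Writing $\frac{d}{dt}\nabla f(X(t)) = ({\rm Hess}\, f)|_{X(t)}\,\dot{X}(t)$ and applying the product rule gives
\[
\frac{d^2}{dt^2} f(X(t)) = ({\rm Hess}\, f)|_{X(t)}\big(\dot{X}(t),\dot{X}(t)\big) + \nabla f(X(t)) \cdot \ddot{X}(t) = 0.
\]
Substituting $\ddot{X}(t) = \frac{\lambda(t)}{\mathsf{m}}\nabla f(X(t))$ from \eqref{heb} turns the last term into $\frac{\lambda(t)}{\mathsf{m}}\|\nabla f(X(t))\|^2$. Solving for $\lambda(t)$ and using the mass normalization $\mathsf{m}=1$ of Remark \ref{units} then yields \eqref{lambdaform}; for general mass an overall factor of $\mathsf{m}$ would multiply the right-hand side. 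Since the zero level set defining $S$ is non-degenerate, $\|\nabla f\|\neq 0$ in a tubular neighborhood of $S$, so the denominator never vanishes and $\lambda(t)$ is well defined.

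The computation is entirely routine and I expect no genuine analytic obstacle. The only point demanding care is the chain rule for the moving gradient, namely correctly identifying $\frac{d}{dt}\nabla f(X(t))$ as the Hessian contracted with the velocity, so that its pairing with $\dot{X}(t)$ produces precisely the quadratic form $({\rm Hess}\, f)(\dot{X},\dot{X})$ appearing in \eqref{lambdaform}. Everything else is bookkeeping.
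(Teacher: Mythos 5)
Your argument is correct and is exactly the paper's proof: differentiate the constraint $f(X(t))=f(X(0))$ twice, use the chain rule to produce $({\rm Hess}\,f)(\dot X,\dot X)+\nabla f\cdot\ddot X=0$, substitute Newton's equation \eqref{heb}, and solve for $\lambda$ using $\nabla f|_S\neq 0$. Your remark about the implicit normalization $\mathsf{m}=1$ in \eqref{lambdaform} is a fair observation but changes nothing of substance.
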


Indeed, to obtain equations for the Lagrange multipliers $\lambda:= \lambda(t)$, we differentiate  \eqref{constaints} twice:
\be \label{constaints2}
\ddot{X}(t)\cdot \nabla f(X(t))=-({\rm Hess} \ f )|_{X(t)}  ( \dot{X}(t),\dot{X}(t)),
\ee
which, as a consequence of \eqref{heb}, leads to
$
\|\nabla f(X(t))\|^2 \lambda(t) =- ({\rm Hess} \ f )|_{X(t)}  ( \dot{X}(t),\dot{X}(t)).
$
Since  $\nabla f|_{S}\neq 0$, the formula \eqref{lambdaform} follows. \medskip

 The system \eqref{heb}, \eqref{hee} and \eqref{lambdaform} complete the description of geodesic motion.
 We now relate this picture with a more geometric description. Recall that the orthogonal projection onto the tangent space to $S$ at $x$ is
\be\label{projector}
\mathbf{P}_x = I - \hat{n}_S(x)\otimes\hat{n}_S(x).
\ee
The {\it second fundamental form} of the submanifold $S $ is given by 
\be
\mathrm{I\!I}_x(v,w)= \mathbf{P}_x^\perp (\nabla_v w) \quad \text{for any vectors} \quad v, w\in T_xS 
\ee
and $ \mathbf{P}_x^\perp$ is the orthogonal projection onto the fiber at $x\in S$ of the normal bundle.  

\begin{lemma}\label{secondfundformlem}
Explicitly, the second fundamental form has the following expression: for $x\in S $
\be\label{secondfundform}
\mathrm{I\!I}_x(u,v) = -\frac{({\rm Hess} \ f )|_{x}(u,v)}{\|\nabla f(x)\|^2} \nabla f(x) .
\ee
\end{lemma}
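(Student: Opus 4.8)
The plan is to unwind both definitions and reduce the claim to a one-line differentiation of the tangency constraint $f|_S\equiv 0$. First I would record the shape of the normal projection: by \eqref{projector} the complementary projector $\mathbf{P}_x^\perp = I - \mathbf{P}_x = \hat{n}_S(x)\otimes\hat{n}_S(x)$ is rank one, so for any ambient vector $Z$ one has
\be
\mathbf{P}_x^\perp Z = \langle \hat{n}_S(x), Z\rangle\, \hat{n}_S(x) = \frac{\langle \nabla f(x), Z\rangle}{\|\nabla f(x)\|^2}\,\nabla f(x).
\ee
Applying this with $Z=\nabla_u v$, the ambient (flat) derivative of the tangent field $v$ along $u$, reduces the desired formula \eqref{secondfundform} to the single scalar identity $\langle \nabla f, \nabla_u v\rangle = -({\rm Hess}\ f)(u,v)$.

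To establish this identity I would extend $u$ and $v$ to vector fields tangent to $S$ in a neighborhood of $x$. Tangency of $v$ means the scalar function $\langle \nabla f, v\rangle$ vanishes identically on $S$; differentiating it along $u$ and using that $\nabla$ is the flat, metric-compatible connection of $\R^n$ gives
\be
0 = u\langle \nabla f, v\rangle = \langle \nabla_u \nabla f, v\rangle + \langle \nabla f, \nabla_u v\rangle.
\ee
Since $\nabla f$ is a gradient, its covariant derivative is exactly the Euclidean Hessian, $\nabla_u \nabla f = ({\rm Hess}\ f)\,u$, so the first term equals $({\rm Hess}\ f)(u,v)$ by symmetry of the Hessian. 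Rearranging yields the scalar identity above, and substituting into the projector formula produces \eqref{secondfundform}.

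The only genuine point to verify is well-definedness: $\mathrm{I\!I}_x$ must depend only on the pointwise values $u,v\in T_xS$ and not on the chosen tangent extensions. This is more of a consistency check than an obstacle, since the final expression $-({\rm Hess}\ f)_x(u,v)\,\nabla f(x)/\|\nabla f(x)\|^2$ is manifestly a pointwise symmetric bilinear form in $(u,v)$, confirming a posteriori that the construction is tensorial and symmetric. I expect the computation to be entirely routine; the one thing to keep straight is that $\nabla$ denotes the ambient Euclidean connection — so that $\nabla_u\nabla f$ really is the full Hessian acting on $u$, and $\mathbf{P}_x^\perp$ extracts precisely the normal part that is discarded when passing from $\nabla_u v$ to the intrinsic covariant derivative on $S$.
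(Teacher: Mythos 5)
Your proposal is correct and follows essentially the same route as the paper: both identify the normal component of $\nabla_u v$ as a multiple of $\nabla f$ and then differentiate the tangency relation $\langle \nabla f, v\rangle = 0$ along a tangent direction to convert $\langle \nabla f, \nabla_u v\rangle$ into $-({\rm Hess}\ f)(u,v)$. The only cosmetic difference is that you write the projector explicitly as the rank-one operator $\hat{n}_S\otimes\hat{n}_S$, whereas the paper introduces an undetermined coefficient $\beta$ and solves for it.
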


Indeed, present
$
 \mathbf{P}_x^\perp (\nabla_v w)  =  \beta \nabla f 
$
for an appropriate $\beta\in \mathbb{R}$ to be determined. Taking the inner product with $\nabla f$, we find the identity
$\nabla f\cdot (\nabla_v w)  = \|\nabla f\|^2 \beta$.
Since $ w\cdot \nabla f|_x=0$ as $w\in T_xS $, we find
$
\nabla f\cdot (\nabla_v w)  =    -{\rm Hess}f (w,v)
$
and thus $\beta =-\frac{{\rm Hess}f (w,v)}{\|\nabla f\|^2}$. This implies formula \eqref{secondfundform} for 
$\mathrm{I\!I}_x$. 

\begin{remark}
 Introducing the momentum variable $P(t):= \mathsf{m}\dot{X}(t)$, the geodesic equations \eqref{heb}--\eqref{hee} may be  intrinsically (and without constraints) written as a first order system
\begin{align} \label{ge1}
\dot{X}(t) &= \tfrac{1}{\mathsf{m}} P(t),\\
\dot{P}(t)&=\mathrm{I\!I}_{X(t)} ( \tfrac{1}{\mathsf{m}}P(t), \tfrac{1}{\mathsf{m}}P(t)).  \label{ge2}
\end{align}
This is nothing but the equations for geodesic motion, which could be seen by noting $\mathbf{P}_{X(t)}\ddot{X}(t)=0$.
\end{remark}


\section{A geometric proof sketch of the Theorem}

Here we present a geometric reasoning for the validity of the main theorem. 
\medskip

Recall that any geodesic problem in Riemannian geometry can be reformulated
in terms of symplectic geometry. Namely, geodesics on a manifold $S$ (of any dimension) are extremals of a quadratic
Lagrangian on $TS$ (coming from the metric on $S$). They can also be described
by the Hamiltonian flow on $T^*S$ for the quadratic Hamiltonian function $(p,p)_q/2$
obtained from the kinetic Lagrangian $(\dot q,\dot q)_q/2$ (via the Legendre transform,  where $(\cdot , \cdot)_q $ are the inner products in the corresponding tangent and cotangent spaces at the point $q\in S$). Magnetic geodesics are by definition projections to $S$ of the Hamiltonian trajectories with the same Hamiltonian $(p,p)_q/2$ but instead of the canonical symplectic structure $\Omega_{\rm can}:= dp\wedge dq=\sum_i dp_i\wedge dq_i$ on $T^*S$ 
one considers the sum $\Omega_{\rm mag}:=\sum_i dp_i\wedge dq_i +\sum_{i<j}B(q)dq_i\wedge dq_j$ related to
the magnetic field $B(q)$ on $S$.
\medskip

Recall that the vortex pair with strengths $\Gamma_1$ and $\Gamma_2$ on $S$ is a Hamiltonian system on $S\times S$ 
with the symplectic structure $\Omega_{\rm pair}= \Gamma_1 dz_1\wedge d\bar z_1+ \Gamma_2 dz_2\wedge d\bar z_2$ and the Hamiltonian $H(z_1,z_2)=\Gamma_1 \Gamma_2 G_S(z_1,z_2) + R(z_1,z_2)$
where the Green function $G_S(z_1,z_2)$ on the surface $S$ has singularity of order $\log \|z_1-z_2\| $, while $R$
is a combination of the  (smooth) Robin functions, see \eqref{pairHamilton}.
\medskip

First consider the case of a pure vortex dipole, $\Gamma_1=-\Gamma_2=:\Gamma$ in the limit $\|z_1-z_2\| \to 0$. 
This means that we need to consider a neighborhood of the diagonal $\Delta=\{z_1=z_2\}\subset 
S\times S$. Note that this diagonal is a Lagrangian submanifold, i.e. $\Omega_{\rm pair}|_\Delta=0$, and hence its neighbourhood $U(\Delta)$ is symplectomorphic to the cotangent bundle $T^*\Delta$ with the symplectic structure $({\rm const})\cdot dp\wedge dz$ for $z:=z_1=z_2$ and $p:=z_1-z_2$, see \cite{ArnoldGiv}. In this neighbourhood $U(\Delta)\subset T^*\Delta$ the singular part of the Hamiltonian 
$H$ is $\Gamma^2\log\|p\|$ and it depends only on $\|p\|=\|z_1-z_2\|$, the distance to $\Delta$. 
Then the principal part of the corresponding Hamiltonian field $J\nabla H$ in $U$ is directed in the same way as the 
Hamiltonian field $J\nabla \widetilde H$ for the Hamiltonian function $\widetilde H:=\|p\|^2$ of the geodesic flow. 
Note, however, that the vortex pair field increases as $1/\|p\|=1/\|z_1-z_2\|$ as $\ve:=\|z_1-z_2\| = \|p\| \to 0$, while  
the geodesic field decays as $\|p\|$ as $\|p\| \to 0$. This explains the infinite speed of the vortex dipole as its width goes to zero
and the necessity of its renormalization. The renormalization boils down to the division by $\ve^2$ and it 
makes the  corresponding Hamiltonian fields of the dipole and the geodesic coincide in the principle order, which completes the proof. (One can trace similar type arguments in the proof below, as well as in \cite{boatto2015, gustafsson2022vortex}.)
\medskip

Now turn to a general vortex pair with different (not necessarily opposite) $\Gamma_1$ and $\Gamma_2$. 
The corresponding Hamiltonian is almost the same
(it differs from $H$ by a constant factor), but the symplectic structure  $\Omega_{\rm pair}$ restricted  
to the diagonal $\Delta$ is non-zero anymore, but it is $\Omega_{\rm pair}|_\Delta=(\Gamma_1+\Gamma_2)dz\wedge d\bar z$. 
According to the Givental--Weinstein theorem, a symplectic structure in a neighborhood of a submanifold in a symplectic space is fully defined
by the restriction of the symplectic structure to that submanifold, see \cite{ArnoldGiv}. Hence one can regard
a neighborhood  $U(\Delta)\subset S\times S$ as the cotangent bundle $T^*\Delta$ equipped  
 with a magnetic symplectic structure $\Omega_{\rm can} + (\Gamma_1+\Gamma_2)dz\wedge d\bar z$. 
 This structure is, in fact, the magnetic symplectic structure $\Omega_{\rm mag}$ above with the constant magnetic
 field $B=\Gamma_1+\Gamma_2$ on $S$. Since the Hamiltonian is (almost) the same as for the case of the dipole,
 the same consideration of its principal part is applicable here. Thus after a renormalization,  
 the corresponding Hamiltonian trajectories 
 are tracing magnetic geodesics on $\Delta$ in the principal order, as claimed.
\medskip

Note that in the geometric setting  above one does not need the surface $S$ to be embedded into the space $\R^3$, and the statement on a vortex pair tracing magnetic geodesics holds for an abstract two-dimensional Riemannian manifold.

\newpage

\section{Analytical proof of the Theorem}

As in Section~\ref{geodesicmotion}, we consider a closed embedded surface $S : = \{ x\in \mathbb{R}^3 \ : \ f(x) = 0\}$ of genus $\mathfrak{g}=0$  in three-space  defined by the zero set of a smooth function $f:\mathbb{R}^3\to \mathbb{R}$.
The considerations below are local and apply in far greater generality. For surfaces of higher genus $\mathfrak{g}\geq 1$, the point vortex system must be modified to account for an evolving harmonic velocity, see Remark \ref{homologicalpoint}.

Recall the unit normal to this surface is $\hat{n}_S(x) := \frac{\nabla f(x)}{\|\nabla f(x)\|}$. Let $\hat{\tau}_1(x)$ and $\hat{\tau}_2(x)$ be an orthonormal basis of the tangent space at a given point $x\in S$ arranged so that $(\hat{\tau}_1, \hat{\tau}_2,\hat{n})$ is a right triple.  Let $J(x)$ be the operator which preserves $ \hat{n}$ and
 takes $\hat{\tau}_1$ to $\hat{\tau}_2$, such that  $J^2=-I$, that is 
 \be\label{Jeqn}
J(x)  v = v\times \hat{n}_S(x) \qquad \text{for any } \quad x\in S, \ v \in T_x S.
\ee


 From the discussion in Section~\ref{vortexsystem}, for a pair of vortices the Hamiltonian is 
\be
H(z_1,z_2)=\Gamma_1 \Gamma_2 G_S(z_1,z_2) + \frac{1}{2}\Gamma_1^2 R_S(z_1,z_1) + \frac{1}{2}\Gamma_2^2 R_S(z_2,z_2).
\ee 
The following Lemma isolates the leading singular behavior of the Green function
\begin{lemma}\label{greenlem}
The Green function of the surface Laplacian \eqref{greensfunction} may be expressed as
\be
G_S(p,q)  = -\frac{1}{2\pi} \log d_S(p,q) + R_S(p,q)
\ee
where, for all $\alpha\in(0,1)$, there is a constant $C=C(S,\alpha)$ such that the ``regular part" \eqref{robinfunction} satisfies
\be
\|  R_S(\cdot,q) \|_{C^{2,\alpha}(S)} \leq C \qquad \text{for all} \qquad  q\in S.
\ee
\end{lemma}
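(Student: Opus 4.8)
The plan is to treat this as a standard parametrix-plus-Schauder argument, exploiting compactness of $S$ to make every estimate uniform in $q$. The guiding principle is that $-\frac{1}{2\pi}\log d_S(\cdot,q)$ is, up to smooth lower-order terms, a fundamental solution of $\tilde\Delta$ on a surface, so subtracting it off should leave a function that solves an elliptic equation with bounded right-hand side.

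First I would localize near the diagonal. Since $S$ is a closed (hence compact) smooth surface, its injectivity radius is bounded below by some $\rho_0>0$; for $d_S(p,q)<\rho_0$ the map $p\mapsto d_S(p,q)$ is smooth away from $p=q$, and one may work in geodesic polar coordinates centred at $q$. There the area density expands as $\sqrt{g}=r\big(1-\tfrac16 K(q)r^2+O(r^3)\big)$ with $r=d_S(p,q)$ and $K$ the Gaussian curvature, so a direct computation on radial functions yields
\be
\tilde\Delta_p\Big(-\tfrac{1}{2\pi}\log d_S(p,q)\Big) = \delta_q(p) + E(p,q),
\ee
where the error $E$ is smooth in $p$ near the diagonal and, because $K$ and all its derivatives are bounded on the compact surface $S$, satisfies $\|E(\cdot,q)\|_{C^{0,\alpha}}\le C(S,\alpha)$ uniformly in $q$. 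This is the step carrying the real geometric content, and the one I expect to demand the most care: one must verify both that the flat singularity produces exactly the Dirac mass $\delta_q$, matching the normalization in \eqref{greensfunction}, and that the curvature corrections assemble into a genuinely bounded, Hölder-regular remainder (cf. \cite{okikiolu2008negative,flucher1997vortex}).

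With the parametrix relation in hand, the rest is elliptic regularity. Subtracting it from the defining equation \eqref{greensfunction} for $G_S$, the Dirac masses cancel and the regular part \eqref{robinfunction} solves
\be
\tilde\Delta_p R_S(p,q) = -\tfrac{1}{{\rm Area}(S)} - E(p,q),
\ee
an elliptic equation whose right-hand side lies in $C^{0,\alpha}$ with norm bounded uniformly in $q$. Interior Schauder estimates for $\tilde\Delta$ (whose ellipticity constants depend only on the metric of $S$) then give $\|R_S(\cdot,q)\|_{C^{2,\alpha}}\le C(S,\alpha)$ on a fixed tubular neighbourhood of the diagonal, uniformly in $q$; on the complement, where $d_S(\cdot,q)\ge \rho_0/2$, the Green function $G_S(\cdot,q)$ itself is smooth (it solves $\tilde\Delta_p G_S=-1/{\rm Area}(S)$ there) and is controlled by the same compactness.

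The one subtlety I would flag is that $d_S(\cdot,q)$ loses smoothness across the cut locus of $q$, so $R_S$ as written in \eqref{robinfunction} inherits that irregularity far from the diagonal. This is immaterial here, since only the behaviour as the pair width $\ve\to 0$, i.e. in a neighbourhood of the diagonal, enters Theorem \ref{mainthm}; at the level of the statement it is removed by replacing $d_S$ with any fixed smooth positive extension agreeing with it for $d_S<\rho_0/2$, which alters $R_S$ only by a smooth function. With this understood, the uniform $C^{2,\alpha}$ bound follows.
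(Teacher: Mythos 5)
Your proposal is correct and follows essentially the same route as the paper: compute $\tilde{\Delta}_p \log d_S(p,q)$ near the diagonal (your geodesic-polar/volume-density expansion is the same computation as the paper's expansion via the mean curvature of geodesic circles), observe that the Dirac masses cancel leaving a right-hand side for $R_S(\cdot,q)$ that is H\"older with norm uniform in $q$, and conclude by Schauder/elliptic regularity. Your closing remark about the cut locus is a genuine point of care that the paper's proof passes over silently, but it does not change the argument.
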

\begin{proof}
For dimensions $d=3,4,5$, this result appears as Theorem 3.5 (a) of \cite{schoen1994lectures}.  Our setting of $d=2$ follows in the same fashion.  Here we provide a direct, simple proof for completeness.  Note
\begin{align}\nonumber
\tilde{\Delta}_p R_S(p,q) &= \tilde{\Delta}_p G_S(p,q)  + \frac{1}{2\pi}  \tilde{\Delta}_p\log d_S(p,q) \\ \label{lapRcalc}
&=\delta_q(p)- \frac{1}{{\rm Area}(S)}   + \frac{1}{2\pi}  \tilde{\Delta}_p\log d_S(p,q).
\end{align}
We now claim that 
\be
 \tilde{\Delta}_p\log d_S(p,q) = -\delta_q(p)-\frac{1}{3}K(p)  + O(d_S),
\ee
where $K(p)$ is the surface Gaussian curvature at $p$, and $O(d_S)$ denotes a term bounded by $\|\cdot \| \leq C d_S(p,q)$. Indeed,
setting $r=d_S(p,q)$, for points $p,q$ such that $r>0$
\begin{align}\label{lapdcal}
 \tilde{\Delta}_p\log d_S(p,q) &= \frac{1}{r} \left( \tilde{\Delta}_p d_S(p,q) - \frac{1}{r}\right)=\frac{1}{r} \left( H_{p,q}(q) - \frac{1}{r}\right)
\end{align}
where $H_{p,q}(q) $ is the  mean curvature of the geodesic sphere (here: circle) of radius $r$, see \cite[Prop. A.4]{beltran2019discrete}. Then there is an expansion (in dimension 2) of the mean curvature (see \cite[Lemma 3.4]{fan2007large}):
\be
H_{p,q}(q) = \frac{1}{r} - \frac{1}{3} R_{ij}(p) \frac{x^i x^j}{r} + O(r^2)
\ee
where $R_{ij}(p)$ is the Ricci tensor evaluated at the point $p$. In dimension 2, the Ricci tensor is proportional to the metric through the Gauss curvature,  $R_{ij}=K g_{ij}$. Moreover, at $p$ we have $g_{ij}(p)=\delta_{ij}$ so that $g_{ij}(p)x^i x^j=|x|^2=r^2$ since $q$ is on the geodesic circle centered at $p$.  This yields
\be
H_{p,q}(q) = \frac{1}{r} - \frac{1}{3} K(p) r + O(r^2).
\ee
Inserting into \eqref{lapdcal} and \eqref{lapRcalc}, we obtain
\be
\tilde{\Delta}_p R_S(p,q) = -\frac{1}{6\pi}K(p) - \frac{1}{{\rm Area}(S)}   + O(d_S). \label{Rrhs}
\ee
The right hand side of \eqref{Rrhs} is $C^{0,1}$.  It follows from elliptic regularity that $R_S(\cdot,q) \in C^{2,\alpha}(S)$.
\end{proof}

With Lemma \ref{greenlem} in hand, we write the Hamiltonian as
\be\label{hamreg}
H(z_1,z_2) =-\frac{\Gamma_1 \Gamma_2}{2\pi}  \log \| z_1-z_2\|  +  \mathsf{Reg}_S(z_1,z_2)
\ee
where the regular part $ \mathsf{Reg}_S(p,q)$ (bounded in $C^{2,\alpha}(S\times S)$) is
\be\label{eq:regular}
 \mathsf{Reg}_S(p,q):=- \frac{\Gamma_1 \Gamma_2}{2\pi} \log \frac{d_S(p,q)}{\|p-q\|}+  \Gamma_1 \Gamma_2  R_S(p,q) + \frac{1}{2}\Gamma_1^2 R_S(p,p) + \frac{1}{2}\Gamma_2^2 R_S(q,q).
\ee
By \eqref{Jeqn}, the symplectic gradient of any function $\phi$ can is represented as
$
J(z)\tilde{\nabla}\phi(z)= \nabla \phi(z) \times \hat{n}_S(z),
$
where $\nabla$ is the gradient in the ambient $\mathbb{R}^3$.
As such,  the equations of motion read
\begin{align*}
 \dot{z }_1 (t)&=  -\frac{\Gamma_2}{2\pi } \frac{z_1-z_2}{  \| z_1-z_2\|^2}\times \hat{n}_S(z_1)  + \frac{1}{\Gamma_1} \nabla_1 \mathsf{Reg}_S(z_1,z_2)\times \hat{n}_S(z_1) ,\\
 \dot{z }_2(t)&= \phantom{-} \frac{\Gamma_1}{2\pi  }  \frac{z_1-z_2}{ \| z_1-z_2\|^2}\times \hat{n}_S(z_2)  +\frac{1}{\Gamma_2}  \nabla_2 \mathsf{Reg}_S(z_1,z_2)\times \hat{n}_S(z_2).
\end{align*}
We introduce relative coordinates
\begin{align}\label{zdefs1}
z _{\rm abs}:= \tfrac{1}{2}(z_1+  z_2), &\qquad z _{\rm rel} := \tfrac{1}{2}(z_1-  z_2), \qquad w := \|z _{\rm rel}\|.
\end{align}
With these notations, we may write
\begin{align}\label{z1eqn}
 \dot{z }_1(t) &=  -\frac{1}{\pi  w^2 } \Big[\Gamma_2 z _{\rm rel}\times \hat{n}_S(z _{\rm abs})   + F_1(z _{\rm abs},z _{\rm rel})\Big],\\
\dot{z }_2(t)&= \phantom{-}  \frac{1}{\pi w^2} \Big[ \Gamma_1 z _{\rm rel}\times \hat{n}_S(z _{\rm abs})+   F_2(z _{\rm abs},z _{\rm rel})\Big],\label{z2eqn}
 \end{align}
 where we have defined
 \begin{align} \label{f1}
 F_1(a,b)&:=   \Gamma_2 b\times\Big( \hat{n}_S(a+b) - \hat{n}_S(a) \Big)  -    \tfrac{\pi}{\Gamma_1}  \|b\|^2 (\nabla_p \mathsf{Reg}_S)(a+b, a-b)\times \hat{n}_S( a+b),\\  \label{f2}
  F_2(a,b)&:=    \Gamma_1b\times\Big( \hat{n}_S(a-b)-\hat{n}_S(a)\Big) +  \tfrac{\pi}{\Gamma_2}  \|b\|^2 (\nabla_q \mathsf{Reg}_S)(a+b, a-b)\times \hat{n}_S(a-b).
 \end{align}
Combining \eqref{zdefs1}, \eqref{z1eqn}--\eqref{z2eqn} and using $z _{1} = z _{\rm abs} + z _{\rm rel}$ and $z _{2} = z _{\rm abs} - z _{\rm rel}$ yields

\begin{lemma} The  evolution for $z _{\rm abs}$, $z _{\rm rel}$ and $w= \|z _{\rm rel}\|$ read
\begin{align}\label{zabs1}
 \dot{z }_{\rm abs}(t) &= \phantom{-} \frac{1}{2\pi  w^2 } \Big[(\Gamma_1-\Gamma_2)z _{\rm rel}\times \hat{n}_S(z _{\rm abs})   + \mathsf{E}_{{z }_{\rm abs} }(z _{\rm abs},z _{\rm rel})\Big],\\ \nonumber
 \dot{z }_{\rm rel} (t) &=  -\frac{1}{2\pi w^2} \Big[(\Gamma_1+\Gamma_2)z _{\rm rel}\times \hat{n}_S(z _{\rm abs}) + (\Gamma_2-\Gamma_1)z_{\rm rel}\times (z_{\rm rel}\cdot \nabla\hat{n}_S(z_{\rm abs}))\\ \label{zrel1} 
 & \qquad \qquad \qquad + \mathsf{E}_{{z }_{\rm rel} }(z _{\rm abs},z _{\rm rel})\Big],\\ \label{w1}
 \dot{w}(t) &=  -\frac{1}{2\pi w^2 } \frac{{z }_{\rm rel} }{w}\cdot   \mathsf{E}_{{z }_{\rm rel} }(z _{\rm abs},z _{\rm rel}),
 \end{align}
 where 
 \be\label{Edefs}
\mathsf{E}_{{z }_{\rm abs} }(a,b):= F_2(a,b)- F_1(a,b), \quad \mathsf{E}_{{z }_{\rm rel} }(a,b):=F_1(a,b)+F_2(a,b) - (\Gamma_2-\Gamma_1)b\times (b\cdot \nabla\hat{n}_S(a)).
\ee
 \end{lemma}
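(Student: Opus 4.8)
The plan is to derive the three evolution equations directly from the vortex equations of motion \eqref{z1eqn}--\eqref{z2eqn} by forming the half-sum and half-difference of $\dot z_1$ and $\dot z_2$, and then to compute $\dot w$ from $\dot z_{\rm rel}$. No estimates are required: the statement is a purely algebraic identity, and the only content is the bookkeeping choice of how the ``error'' terms $\mathsf{E}_{z_{\rm abs}}$ and $\mathsf{E}_{z_{\rm rel}}$ of \eqref{Edefs} are split off.

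First I would compute $\dot z_{\rm abs} = \tfrac12(\dot z_1 + \dot z_2)$. Substituting \eqref{z1eqn} and \eqref{z2eqn}, the two leading cross-product terms carry coefficients $-\Gamma_2$ and $+\Gamma_1$ against the common vector $z_{\rm rel}\times\hat n_S(z_{\rm abs})$, so they combine to $(\Gamma_1-\Gamma_2)\,z_{\rm rel}\times\hat n_S(z_{\rm abs})$, while the remaining regular parts give $F_2 - F_1$. By the definition $\mathsf{E}_{z_{\rm abs}} := F_2 - F_1$ in \eqref{Edefs}, this is exactly \eqref{zabs1}.

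Next, $\dot z_{\rm rel} = \tfrac12(\dot z_1 - \dot z_2)$. Now the cross terms add, producing coefficient $-(\Gamma_1+\Gamma_2)$, and the regular parts give $-(F_1+F_2)$. To match \eqref{zrel1} I would add and subtract $(\Gamma_2-\Gamma_1)\,z_{\rm rel}\times(z_{\rm rel}\cdot\nabla\hat n_S(z_{\rm abs}))$, which is precisely the quantity peeled off in the definition of $\mathsf{E}_{z_{\rm rel}}$; the leftover combination is then $\mathsf{E}_{z_{\rm rel}} = F_1 + F_2 - (\Gamma_2-\Gamma_1)\,z_{\rm rel}\times(z_{\rm rel}\cdot\nabla\hat n_S(z_{\rm abs}))$. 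The reason for this particular split is that Taylor-expanding the normal differences $\hat n_S(a\pm b)-\hat n_S(a)$ appearing in $F_1, F_2$ via \eqref{f1}--\eqref{f2} yields $\pm\,b\cdot\nabla\hat n_S(a)$ at leading order, so the extracted term is the genuine leading contribution and $\mathsf{E}_{z_{\rm rel}}$ is higher order — though for this identity that is immaterial. This gives \eqref{zrel1}.

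Finally, for $w = \|z_{\rm rel}\|$ I would use $\dot w = \tfrac{z_{\rm rel}}{w}\cdot \dot z_{\rm rel}$. The crucial observation is that both structured terms in \eqref{zrel1} are cross products having $z_{\rm rel}$ as one factor, hence orthogonal to $z_{\rm rel}$ by the scalar triple product identity $z_{\rm rel}\cdot(z_{\rm rel}\times v)=0$. Thus both the $(\Gamma_1+\Gamma_2)$ and the $(\Gamma_2-\Gamma_1)$ terms drop out, leaving only the projection of the error, which is \eqref{w1}. There is no real obstacle in this lemma — it is elementary bookkeeping — but the point worth flagging is this orthogonality, since it shows that $w$ evolves solely through the higher-order term $\mathsf{E}_{z_{\rm rel}}$, foreshadowing that the pair width remains of order $\ve$.
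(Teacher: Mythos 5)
Your proposal is correct and is exactly the computation the paper intends: the lemma is stated as an immediate consequence of forming the half-sum and half-difference of \eqref{z1eqn}--\eqref{z2eqn}, with the $\mathsf{E}$ terms defined precisely so that the bookkeeping closes, and $\dot w$ follows from $\dot w = \tfrac{z_{\rm rel}}{w}\cdot\dot z_{\rm rel}$ together with the orthogonality $z_{\rm rel}\cdot(z_{\rm rel}\times v)=0$. No further comment is needed.
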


Using the results of Lemma \ref{bndlem} in Appendix \ref{appendix}, we estimate lower--order contributions in the evolutions \eqref{zabs1}--\eqref{w1}. We use the big--$O$ notation to mean 
$f = O(g)$ provided $\|f\| \leq C \|g\|$ for a $g$--independent constant $C>0$. We arrive at the following bounds
 \begin{lemma}\label{lem:hard}
   With $\Gamma=  \max\{|\Gamma_1|,|\Gamma_2|\} $, the  evolution for $z _{\rm abs}$, $z _{\rm rel}$ and $w$ satisfy
\begin{align}
 \dot{z }_{\rm abs}(t) &= \phantom{-} \frac{1}{2\pi  w^2 } \Big[(\Gamma_1-\Gamma_2) {z} _{\rm rel}\times \hat{n}_S(z _{\rm abs})   + O(\Gamma w^3)+ O((\Gamma_1+\Gamma_2) w^2)\Big],\\ \nonumber
 \dot{z }_{\rm rel} (t) &=  -\frac{1}{2\pi w^2} \Big[(\Gamma_1+\Gamma_2) {z} _{\rm rel}\times \hat{n}_S(z _{\rm abs}) + (\Gamma_2-\Gamma_1)z_{\rm rel}\times (z_{\rm rel}\cdot \nabla\hat{n}_S(z_{\rm abs})) \\
 &\qquad\qquad\qquad+ O(\Gamma w^3)+ O((\Gamma_1+\Gamma_2) w^2)\Big],\\
 \dot{w}(t) &=  -\frac{1}{2\pi w^2}\Big[O(\Gamma w^4)+ O((\Gamma_1+\Gamma_2) w^3)\Big].
 \end{align}
 \end{lemma}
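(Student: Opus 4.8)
The plan is to prove the three estimates by substituting the definitions \eqref{f1}--\eqref{f2} of $F_1,F_2$ into the error terms \eqref{Edefs}, and then expanding every ingredient — the normal field $\hat{n}_S$ and the regular part $\mathsf{Reg}_S$ — in a Taylor series about the diagonal $z_{\rm rel}=0$, i.e. in powers of $b:=z_{\rm rel}$ with $\|b\|=w$. It is convenient to decompose \eqref{eq:regular} as $\mathsf{Reg}_S=\Gamma_1\Gamma_2\,\rho+\tfrac12\Gamma_1^2 R_S(p,p)+\tfrac12\Gamma_2^2 R_S(q,q)$, where $\rho(p,q):=-\tfrac{1}{2\pi}\log\tfrac{d_S(p,q)}{\|p-q\|}+R_S(p,q)$. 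The uniform $C^{2,\alpha}$ control of $\mathsf{Reg}_S$ from Lemma \ref{greenlem}, together with the remainder bounds collected in Lemma \ref{bndlem}, bounds each Taylor coefficient, so the whole task reduces to bookkeeping which monomials in $b$ survive and which circulation prefactor ($\Gamma$, $\Gamma_1-\Gamma_2$, or $\Gamma_1+\Gamma_2$) each carries. The guiding principle is that symmetric combinations $F_1+F_2$ produce $(\Gamma_1+\Gamma_2)$ prefactors while their $b$-linear pieces cancel, whereas antisymmetric combinations $F_2-F_1$ produce $(\Gamma_1-\Gamma_2)=O(\Gamma)$ prefactors.

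For $\dot{z}_{\rm abs}$ and $\dot{z}_{\rm rel}$ the estimates are a matter of collecting orders. Using $\hat{n}_S(a\pm b)-\hat{n}_S(a)=\pm(b\cdot\nabla)\hat{n}_S(a)+\tfrac12(b\cdot\nabla)^2\hat{n}_S(a)+O(w^3)$, the first-order normal terms combine in $\mathsf{E}_{z_{\rm abs}}=F_2-F_1$ into $-(\Gamma_1+\Gamma_2)\,b\times(b\cdot\nabla)\hat{n}_S(a)=O((\Gamma_1+\Gamma_2)w^2)$, the second-order terms give $O((\Gamma_1-\Gamma_2)w^3)=O(\Gamma w^3)$, and the $\mathsf{Reg}_S$ contributions are $O((\Gamma_1+\Gamma_2)w^2)$ after dividing by $\Gamma_i$. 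For $\mathsf{E}_{z_{\rm rel}}=F_1+F_2-(\Gamma_2-\Gamma_1)\,b\times(b\cdot\nabla\hat{n}_S(a))$ the first-order normal term of $F_1+F_2$ is exactly $(\Gamma_2-\Gamma_1)\,b\times(b\cdot\nabla)\hat{n}_S(a)$ and cancels the subtracted term identically, so what remains is again $O(\Gamma w^3)+O((\Gamma_1+\Gamma_2)w^2)$.

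The delicate estimate — and the main obstacle — is the one for $\dot{w}$, which requires $\tfrac{z_{\rm rel}}{w}\cdot\mathsf{E}_{z_{\rm rel}}$ to be two powers of $w$ smaller than $\mathsf{E}_{z_{\rm rel}}$ itself, so that termwise bounding fails and two cancellations must be isolated. First, every cross-product term $b\times(\cdots)$ in $\mathsf{E}_{z_{\rm rel}}$ is orthogonal to $b=z_{\rm rel}$, so only the regular-part pieces $\det[b,\nabla\mathsf{Reg}_S,\hat{n}_S]$ survive the contraction. Second, and crucially, the surviving determinant $\tfrac{1}{\Gamma_2}\det[b,\nabla_q\mathsf{Reg}_S,\hat{n}_S(a-b)]-\tfrac{1}{\Gamma_1}\det[b,\nabla_p\mathsf{Reg}_S,\hat{n}_S(a+b)]$ has a vanishing leading (linear-in-$b$) coefficient: on the diagonal one has $\tfrac{1}{\Gamma_1}\nabla_p\mathsf{Reg}_S(a,a)=\tfrac{1}{\Gamma_2}\nabla_q\mathsf{Reg}_S(a,a)=(\Gamma_1+\Gamma_2)\,\nabla_1 R_S(a,a)$, because the symmetry $R_S(p,q)=R_S(q,p)$ together with the quadratic smallness of $\log(d_S/\|p-q\|)$ near the diagonal forces $\nabla_1\rho$ and $\nabla_1 R_S$ to agree there. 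Carrying the expansion one further order, the quadratic-in-$b$ coefficient again collapses onto a $(\Gamma_1+\Gamma_2)$ prefactor (using $\nabla_1\nabla_2\rho=\nabla_2\nabla_1\rho$ and $\nabla_1\nabla_1\rho=\nabla_2\nabla_2\rho$ on the diagonal), leaving the cubic-in-$b$ remainder as the only genuinely $O(\Gamma)$ contribution. Hence $\tfrac{z_{\rm rel}}{w}\cdot\mathsf{E}_{z_{\rm rel}}=O((\Gamma_1+\Gamma_2)w^3)+O(\Gamma w^4)$, as claimed. Pinpointing precisely these symmetry-driven cancellations — rather than estimating $\mathsf{E}_{z_{\rm rel}}$ directly, which loses the two powers of $w$ — is the conceptual heart of the argument, and it is exactly the quantitative content packaged into Lemma \ref{bndlem}.
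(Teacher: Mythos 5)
Your argument is correct, and for the first two evolution equations it follows essentially the paper's route: Taylor-expand $\hat{n}_S$ and $\mathsf{Reg}_S$ about the diagonal and track which circulation prefactor ($\Gamma$, $\Gamma_1-\Gamma_2$, or $\Gamma_1+\Gamma_2$) each surviving monomial in $b=z_{\rm rel}$ carries. Where you genuinely diverge is the $\dot{w}$ estimate. The paper proves the \emph{full vector} bound $\mathsf{E}_{z_{\rm rel}}(a,b)=O(\Gamma\|b\|^4)+O((\Gamma_1+\Gamma_2)\|b\|^3)$ (estimate \eqref{Erelest} of Lemma \ref{bndlem}, via the four-term decomposition ${\tt I}+{\tt II}+{\tt III}+{\tt IV}$ in the appendix), after which the $\dot{w}$ claim is immediate from $\|z_{\rm rel}\|/w=1$; the two extra powers of $w$ are already extracted at the vector level, using one more order of Taylor expansion of $\hat{n}_S$ together with the diagonal symmetries $\nabla_1\nabla_2R_S=\nabla_2\nabla_1R_S$, $\nabla_1R_S(a,a)=\nabla_2R_S(a,a)$, and the quadratic vanishing of $\log(d_S(p,q)/\|p-q\|)$ at the diagonal. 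You instead settle for the weaker vector bound $O(\Gamma w^3)+O((\Gamma_1+\Gamma_2)w^2)$ (which is all that the $\dot{z}_{\rm abs}$ and $\dot{z}_{\rm rel}$ claims require) and recover the gain for $\dot{w}$ from the contraction itself: the orthogonality $b\cdot(b\times V)=0$ eliminates every normal-field term, and the surviving determinants $\det[b,\nabla\mathsf{Reg}_S,\hat{n}_S]$ are handled by exactly the same diagonal symmetries. Both routes rest on the same cancellations and both are valid; yours is a shade more economical for this lemma, but the paper's stronger vector bound \eqref{Erelest} is reused later (in deriving \eqref{rel-abs} within the proof of Proposition \ref{mainprop}), so you would eventually need to upgrade your bound on $\mathsf{E}_{z_{\rm rel}}$ anyway. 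One cosmetic remark: you cite Lemma \ref{bndlem} as an input, but that lemma \emph{is} the quantitative content being established here; since your sketch supplies the relevant cancellations independently, nothing is circular in substance, but the reference should be dropped.
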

 Here we introduce the new time variable 
 \be
\tau(t) = \int_0^t  \frac{\ve^2}{w^2(s)}\rmd s
\ee
along with its inverse function $t(\tau)$. We thus have ${z}(t(\tau)) =\tilde{z}(\tau) $, and 
abusing notation we write ${z}(\tau) :=\tilde{z}(\tau) $.  In this new time, we have
  \begin{align}\label{zabseqn}
 \dot{z }_{\rm abs}(\tau) &= \phantom{-} \frac{1}{2\pi \ve^2 } \Big[(\Gamma_1-\Gamma_2)z _{\rm rel}\times \hat{n}_S(z _{\rm abs})   + \mathsf{E}_{{z }_{\rm abs} }(z _{\rm abs},z _{\rm rel})\Big],\\ \nonumber
 \dot{z }_{\rm rel} (\tau) &=  -\frac{1}{2\pi \ve^2 }\Big[(\Gamma_1+\Gamma_2)z _{\rm rel}\times \hat{n}_S(z _{\rm abs})  + (\Gamma_2-\Gamma_1)z_{\rm rel}\times (z_{\rm rel}\cdot \nabla\hat{n}_S(z_{\rm abs}))\\ \label{zreleqn}
 &\qquad\qquad\qquad + \mathsf{E}_{{z }_{\rm rel} }(z _{\rm abs},z _{\rm rel})\Big],\\ \label{weqn}
 \dot{w}(\tau) &=  -\frac{1}{2\pi \ve^2 } \frac{{z }_{\rm rel} }{w}\cdot   \mathsf{E}_{{z }_{\rm rel} }(z _{\rm abs},z _{\rm rel}).
 \end{align}
 
We obtain our main result, expressed in this new time:
 \begin{prop} \label{mainprop}
The curve $z_{\rm abs}(\tau)$ is an approximate magnetic geodesic:
    \begin{align}\label{zabsgeodeqn}
 \ddot{z }_{\rm abs}(\tau)&=  - \frac{  {\rm Hess} f_{{z }_{\rm abs}}(\dot{z }_{\rm abs},\dot{z }_{\rm abs})}{\|\nabla f({z }_{\rm abs})\|}  \hat{n}_S( {z }_{\rm abs}) + \frac{1}{\ve^2}(\Gamma_1 + \Gamma_2) \dot{z }_{\rm abs} \times     \hat{n}_S( {z }_{\rm abs})+ \mathsf{Error}(\tau)
  \end{align}
  where the $\mathsf{Error}$ terms are bounded according to
  \be\label{error}
|\mathsf{Error}| \lesssim  \frac{1}{\ve^4}\max\Big(\Gamma^2w^4, \ \Gamma(\Gamma_1 + \Gamma_2) w^3, \  (\Gamma_1+\Gamma_2)^2w^{2}\Big)
\ee
where  $\Gamma=  \max\{|\Gamma_1|,|\Gamma_2|\} $.
\end{prop}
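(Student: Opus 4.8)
The plan is to produce \eqref{zabsgeodeqn} by differentiating the first–order law \eqref{zabseqn} once more in $\tau$ and substituting \eqref{zreleqn} for $\dot z_{\rm rel}$, together with \eqref{zabseqn} itself where it enters through $\tfrac{d}{d\tau}\hat n_S(z_{\rm abs})=(\dot z_{\rm abs}\cdot\nabla)\hat n_S(z_{\rm abs})$. Abbreviating $a:=z_{\rm abs}$ and $b:=z_{\rm rel}$, this gives
\[
\ddot z_{\rm abs}=\frac{1}{2\pi\ve^2}\Big[(\Gamma_1-\Gamma_2)\big(\dot b\times\hat n_S(a)+b\times(\dot a\cdot\nabla)\hat n_S(a)\big)+\tfrac{d}{d\tau}\mathsf E_{z_{\rm abs}}(a,b)\Big].
\]
The dominant piece comes from inserting the $(\Gamma_1+\Gamma_2)$–term of \eqref{zreleqn} into $\dot b\times\hat n_S(a)$: by \eqref{Jeqn} the rotation $v\mapsto v\times\hat n_S$ squares to $-I$ on $T_aS$, so $(b\times\hat n_S)\times\hat n_S=-b$, and this contribution collapses to a purely tangential multiple of $z_{\rm rel}$ of size $\tfrac{\Gamma_1^2-\Gamma_2^2}{\ve^4}b$. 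Since the leading velocity is $\dot z_{\rm abs}=\tfrac{\Gamma_1-\Gamma_2}{2\pi\ve^2}\,b\times\hat n_S$ to leading order, applying $J^2=-I$ once more rewrites $b$ in terms of $\dot z_{\rm abs}\times\hat n_S$, turning this term into the Lorentz force $\mathsf q\,\dot z_{\rm abs}\times B$ with $B=(\Gamma_1+\Gamma_2)\hat n_S$ and $\mathsf q=\tfrac1{2\pi\ve^2}$ — the mechanism that forces the magnetic field to have strength $\Gamma_1+\Gamma_2$.

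Next I would extract the normal (constraint) term. The two remaining curvature pieces — the $(\Gamma_2-\Gamma_1)\,b\times(b\cdot\nabla\hat n_S)$ term of \eqref{zreleqn} and the $b\times(\dot a\cdot\nabla)\hat n_S$ term — assemble, using repeatedly that a cross product of two tangent vectors is normal and hence annihilated by a further $\times\hat n_S$, together with Lemma \ref{secondfundformlem}, into $\mathrm{I\!I}_a(b\times\hat n_S,b\times\hat n_S)$; with the leading value of $\dot z_{\rm abs}$ this is precisely $\mathrm{I\!I}_a(\dot z_{\rm abs},\dot z_{\rm abs})=-\tfrac{{\rm Hess}f|_a(\dot z_{\rm abs},\dot z_{\rm abs})}{\|\nabla f(a)\|}\hat n_S(a)$. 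Equivalently and more cleanly, one may note that $z_1,z_2\in S$ force $f(z_{\rm abs})=O(w^2)$, so differentiating $f(z_{\rm abs})$ twice in $\tau$ and invoking Lemma \ref{secondfundformlem} yields $\hat n_S(a)\cdot\ddot z_{\rm abs}=-\tfrac{{\rm Hess}f|_a(\dot z_{\rm abs},\dot z_{\rm abs})}{\|\nabla f(a)\|}+\tfrac1{\|\nabla f(a)\|}\tfrac{d^2}{d\tau^2}f(z_{\rm abs})$, identifying the normal part of $\ddot z_{\rm abs}$ with the second–fundamental–form term up to an error controlled by the off–surface drift of the midpoint.

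It then remains to collect the errors: the $\tfrac1{\ve^2}\tfrac{d}{d\tau}\mathsf E_{z_{\rm abs}}$ and $\mathsf E_{z_{\rm rel}}$ contributions, the replacement of the full $\dot z_{\rm abs}$ by its leading value inside the quadratic second–fundamental–form and Lorentz terms, and the higher–order remainders of the two curvature pieces. These are bounded using the sizes of $\mathsf E_{z_{\rm abs}},\mathsf E_{z_{\rm rel}}$ from Lemma \ref{lem:hard} and the appendix bounds (Lemma \ref{bndlem}), the fact that $\dot w$ is of higher order by \eqref{weqn} so that $w$ stays comparable to $\ve$ and the prefactor $1/\ve^4$ is harmless, and the uniform $C^{2,\alpha}$ control of $\mathsf{Reg}_S$ from Lemma \ref{greenlem} — needed because forming $\ddot z_{\rm abs}$ differentiates $\mathsf E$ once more and hence calls on $\nabla^2\mathsf{Reg}_S$. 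The one genuinely delicate input is a refined geometric estimate: although a generic chord is tangent to $S$ only to quadratic order, the \emph{midpoint} symmetry gives $z_{\rm rel}\cdot\hat n_S(z_{\rm abs})=O(w^3)$, i.e.\ the chord is tangent at its midpoint to cubic order. This upgrades what would naively be an $O(\Gamma^2w^3/\ve^4)$ remainder in the curvature pieces to the $O(\Gamma^2w^4/\ve^4)$ appearing in \eqref{error}; organising the surviving monomials in $\Gamma$, $\Gamma_1+\Gamma_2$ and $w$ against the envelope $\tfrac1{\ve^4}\max(\Gamma^2w^4,\Gamma(\Gamma_1+\Gamma_2)w^3,(\Gamma_1+\Gamma_2)^2w^2)$ then yields \eqref{error}.

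The main obstacle is this error bookkeeping rather than the algebra of the leading terms, which is clean once $J^2=-I$ and Lemma \ref{secondfundformlem} are in hand. Concretely, the two points requiring care are (a) the cubic–order tangency $z_{\rm rel}\cdot\hat n_S(z_{\rm abs})=O(w^3)$, without which the curvature remainder would exceed the stated bound, and (b) differentiating the implicitly–defined regular part once more in time, which is exactly calibrated to the $C^{2,\alpha}$ regularity of Lemma \ref{greenlem}; both must be made uniform in $q\in S$ and over the compact $\tau$–interval on which the comparison is carried out.
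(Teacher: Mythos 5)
Your proposal is correct and follows essentially the same route as the paper: differentiate \eqref{zabseqn} in $\tau$, substitute \eqref{zreleqn} and use $J^2=-I$ to produce the Lorentz term with field strength $\Gamma_1+\Gamma_2$, identify the normal term via the gradient-of-$\hat n_S$/Hessian identity (the paper's Lemma \ref{gradnormlem}), and control the remainders with the appendix bounds together with the cubic-order midpoint tangency $z_{\rm rel}\cdot\hat n_S(z_{\rm abs})=O(w^3)$ (the paper's Lemma \ref{normaltangentlemma}), which you correctly single out as the delicate input. Your only deviation is the optional shortcut of extracting the normal component by differentiating $f(z_{\rm abs})$ twice, which the paper does not use but which is a valid equivalent.
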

\begin{remark}
All errors \eqref{error} are of commensurate order in the normalization in Remark \ref{scaling}.
\end{remark}
 
 \begin{proof}
According to \eqref{zreleqn}, we have
\begin{align*}
 \ddot{z }_{\rm abs}(\tau)  &= \frac{1}{2\pi \ve^2 }  \frac{\rmd}{\rmd \tau} \Big[(\Gamma_1-\Gamma_2)z _{\rm rel}\times \hat{n}_S(z _{\rm abs})   + \mathsf{E}_{{z }_{\rm abs} }(z _{\rm abs},z _{\rm rel})\Big]\\
 &= \frac{1}{2\pi \ve^2 }   \Big[(\Gamma_1-\Gamma_2)\dot{z} _{\rm rel}\times \hat{n}_S(z _{\rm abs})+ (\Gamma_1-\Gamma_2){z} _{\rm rel}\times\dot{z}_{\rm abs}\cdot \nabla \hat{n}_S(z _{\rm abs})  \\
 & \qquad \qquad  + \dot{z}_{\rm abs} \cdot \nabla_1 \mathsf{E}_{{z }_{\rm abs} }(z _{\rm abs},z _{\rm rel})+ \dot{z}_{\rm rel} \cdot \nabla_2 \mathsf{E}_{{z }_{\rm abs} }(z _{\rm abs},z _{\rm rel})\Big].
\end{align*}
By Lemma \ref{bndlem}, together with equations \eqref{zabseqn} and \eqref{zreleqn}, we have
\begin{align*}
|\dot{z}_{\rm abs} \cdot \nabla_1 \mathsf{E}_{{z }_{\rm abs} }(z _{\rm abs},z _{\rm rel})| &\lesssim  \frac{  \Gamma^2 }{\ve^2}w^4+  \frac{ \Gamma (\Gamma_1 + \Gamma_2)}{\ve^2} w^3,\\
|\dot{z}_{\rm rel} \cdot \nabla_2\mathsf{E}_{{z }_{\rm abs} }(z _{\rm abs},z _{\rm rel})|& \lesssim    \frac{ \Gamma (\Gamma_1 + \Gamma_2)}{\ve^2} w^3+   \frac{ (\Gamma_1 + \Gamma_2)^2}{\ve^2} w^2.
\end{align*}
Thus, using the equations \eqref{zabseqn} and \eqref{zreleqn}, we obtain
\begin{align}\nonumber
 \ddot{z }_{\rm abs}(\tau) 
&=  \frac{1}{2\pi\ve^2}(\Gamma_1 - \Gamma_2)  \dot{z }_{\rm rel}(\tau)\times \hat{n}_S( {z }_{\rm abs}) \\  \label{interzeqn}
&\qquad + \frac{1}{2\pi\ve^4}(\Gamma_1 - \Gamma_2)^2     {z} _{\rm rel} \times  ( {z} _{\rm rel} \times \hat{n}_S( {z }_{\rm abs})) \cdot \nabla \hat{n}_S( {z }_{\rm abs})+ \mathsf{Error},
  \end{align}
  with $\mathsf{Error}$ satisfying the bound \eqref{error}. To show this is (magnetic) geodesic motion, we require:
\begin{lemma}\label{gradnormlem}
Let $x\in S$ and $v$ be a unit vector orthogonal to $\hat{n}(x)$. The following identity holds
\begin{align}\label{nident}
v \times ( v\times \hat{n}_S( x)) \cdot \nabla \hat{n}_S(x)&=  -    \frac{  {\rm Hess} f_{x}(v\times \hat{n}_S, v\times \hat{n}_S)}{\|\nabla f(x)\|}\hat{n}_S(x) .
\end{align}
\end{lemma}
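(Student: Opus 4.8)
The plan is to first pin down the reading of the left-hand side, then reduce everything to a single differentiation of the normalized gradient $\hat{n}_S = \nabla f/\|\nabla f\|$. Writing $\tau := v\times \hat{n}_S(x) = J(x)v$ for the unit tangent vector obtained by rotating $v$ by ninety degrees, the factor $(v\times \hat{n}_S)\cdot\nabla \hat{n}_S$ is the ambient directional derivative $(\tau\cdot\nabla)\hat{n}_S =: D_\tau \hat{n}_S$, which is well defined since $\hat{n}_S$ extends smoothly to a tubular neighborhood of $S$. Thus the left-hand side is $v\times D_\tau \hat{n}_S$, and the content of the lemma is the identity $v\times D_\tau\hat{n}_S = \mathrm{I\!I}_x(\tau,\tau)$, with $\mathrm{I\!I}_x(\tau,\tau)$ given by Lemma \ref{secondfundformlem}.

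The first step is to compute $D_\tau \hat{n}_S$. Differentiating $\hat{n}_S\cdot\hat{n}_S = 1$ along $\tau$ gives $\hat{n}_S\cdot D_\tau \hat{n}_S = 0$, so $D_\tau \hat{n}_S \in T_xS$. Differentiating $\hat{n}_S = \nabla f/\|\nabla f\|$ directly and using $D_\tau(\nabla f) = ({\rm Hess}\,f)\tau$ yields
\[
D_\tau \hat{n}_S = \frac{1}{\|\nabla f\|}\,\mathbf{P}_x\big(({\rm Hess}\,f)\tau\big),
\]
where $\mathbf{P}_x = I - \hat{n}_S\otimes\hat{n}_S$ is the tangential projector from \eqref{projector}. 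The second step is to cross with $v$: decomposing the tangent vector $D_\tau \hat{n}_S$ in the orthonormal tangent basis $(v,\tau)$ and using $v\times v = 0$ together with the orientation identity $v\times\tau = v\times(v\times\hat{n}_S) = -\hat{n}_S$ (from the $\mathrm{BAC}$–$\mathrm{CAB}$ rule, since $v\cdot\hat{n}_S = 0$ and $|v|=1$), only the $\tau$-component survives, giving $v\times D_\tau\hat{n}_S = -(\tau\cdot D_\tau\hat{n}_S)\,\hat{n}_S$. Finally, since $\mathbf{P}_x$ is a self-adjoint projector fixing the tangent vector $\tau$,
\[
\tau\cdot D_\tau\hat{n}_S = \frac{1}{\|\nabla f\|}\,\tau\cdot\mathbf{P}_x\big(({\rm Hess}\,f)\tau\big) = \frac{{\rm Hess}\,f_x(\tau,\tau)}{\|\nabla f\|},
\]
which produces the claimed formula with $\tau = v\times\hat{n}_S$.

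The computation is short, and there is no serious obstacle; the two points demanding care are purely a matter of parsing and sign conventions. First, one must read $\cdot\,\nabla\hat{n}_S$ as binding to the inner factor $v\times\hat{n}_S$ rather than to $v\times(v\times\hat{n}_S)$: the latter reading gives $-(\hat{n}_S\cdot\nabla)\hat{n}_S$, a tangent vector, which cannot match the right-hand side (a multiple of $\hat{n}_S$). Second, the sign in $v\times\tau = -\hat{n}_S$ must be taken from the convention $Jv = v\times\hat{n}_S$ fixed in \eqref{Jeqn}, which also guarantees $J^2=-I$. Both are routine once made explicit.
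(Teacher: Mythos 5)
Your proof is correct and follows essentially the same route as the paper: both hinge on the identity $\nabla \hat{n}_S = \mathbf{P}_x\,{\rm Hess}\,f_x/\|\nabla f\|$ from \eqref{gradnorm}, followed by decomposing in the orthonormal frame $(v,\hat{n}_S,v\times\hat{n}_S)$ and using $v\times(v\times\hat{n}_S)=-\hat{n}_S$. Your use of the self-adjoint projector $\mathbf{P}_x$ to isolate the $\tau$-component is just a cleaner phrasing of the paper's basis expansion of ${\rm Hess}\,f_x(v\times\hat{n}_S,\cdot)$, and your explicit remark on how to parse the left-hand side matches the paper's intended reading.
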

\begin{proof}
The gradient of the (extended) normal vector field can be expressed as
\begin{align}\nonumber
\nabla \hat{n}_S(x) &= \frac{{\rm Hess} f_x}{\|\nabla f(x)\|}- \frac{\nabla f(x)\otimes \nabla \|\nabla f(x)\|}{\|\nabla f(x)\|^2}= \frac{{\rm Hess} f_x}{\|\nabla f(x)\|}- \frac{ \hat{n}_S(x)\otimes {\rm Hess} f_x( \hat{n}_S(x), \cdot)}{\|\nabla f(x)\|} \\  \label{gradnorm}
&=\Big(I- \hat{n}_S(x)\otimes \hat{n}_S(x)\Big)\frac{{\rm Hess} f_x }{\|\nabla f(x)\|} = \mathbf{P}_x\frac{{\rm Hess} f_x }{\|\nabla f(x)\|}.
\end{align}
Next,   we express the vector $  {\rm Hess} f_{x}(v\times \hat{n}_S, \cdot)$ in the orthonormal basis  $(v, \hat{n}_S, v\times \hat{n}_S)$: 
  \begin{align}\nonumber
  {\rm Hess} f_{x}(v \times \hat{n}_S, \cdot) &=   {\rm Hess} f_{x}(v \times \hat{n}_S,v)v+   {\rm Hess} f_{x}(v \times \hat{n}_S, \hat{n}_S) \hat{n}_S\\
  &\qquad \qquad +  {\rm Hess} f_{x}(v \times \hat{n}_S,v \times \hat{n}_S) v \times \hat{n}_S.
  \end{align}
    We finally note that
$
  {\rm Hess} f_{x}(v \times \hat{n}_S,  \hat{n}_S)=0,
$
  which follows from contracting \eqref{gradnorm}  with $v \times \hat{n}_S$ and $\hat{n}_S$, and using that $\nabla \|\hat{n}_S\|^2=0$.   The identity \eqref{nident} follows.
\end{proof}
We require one more elementary lemma:
\begin{lemma}\label{normaltangentlemma}
	Let $z_1$ and $z_2\in\mathbb{R}^3$ be points on the surface $S$. Then 
	\be
	|z_{\rm rel}\cdot  \hat{n}_S( {z }_{\rm abs}) |= O(\|z_{\rm rel}\|^3).
	\ee
\end{lemma}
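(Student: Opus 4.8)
The plan is to use the only structural fact available, namely that both endpoints lie on $S$, i.e. $f(z_1)=f(z_2)=0$, and to read off the cubic smallness of the normal component from the symmetry of the midpoint expansion. Writing $a:=z_{\rm abs}$ and $b:=z_{\rm rel}$, so that $z_1=a+b$ and $z_2=a-b$, I would Taylor expand $f$ to second order about the midpoint $a$:
\begin{align*}
f(a\pm b)=f(a)\pm \nabla f(a)\cdot b+\tfrac12\,{\rm Hess}\,f_a(b,b)+O(\|b\|^3).
\end{align*}
The key point is that $a$ is the \emph{midpoint}: subtracting the two expansions, the even-order terms $f(a)$ and $\tfrac12{\rm Hess}\,f_a(b,b)$ cancel, leaving
\begin{align*}
0=f(a+b)-f(a-b)=2\,\nabla f(a)\cdot b+O(\|b\|^3).
\end{align*}

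Hence $\nabla f(a)\cdot b=O(\|b\|^3)$, and dividing by $\|\nabla f(a)\|$ together with $\hat{n}_S(a)=\nabla f(a)/\|\nabla f(a)\|$ gives
\begin{align*}
z_{\rm rel}\cdot \hat{n}_S(z_{\rm abs})=\frac{\nabla f(a)\cdot b}{\|\nabla f(a)\|}=O(\|b\|^3)=O(\|z_{\rm rel}\|^3),
\end{align*}
which is the assertion. Note that I never invoke $f(a)=0$: the midpoint $z_{\rm abs}$ need not lie on $S$, but $f(a)$ drops out of the antisymmetric combination regardless. Geometrically this is just the statement that a short chord of $S$ is tangent to the surface up to an error quadratic in its length, and that evaluating the normal at the midpoint rather than at an endpoint upgrades this to cubic order.

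The computation itself is immediate, so the only thing requiring care is the uniformity of the two $O(\cdot)$ constants. Since $S$ is compact and $\nabla f\neq 0$ on $S$, there is a tubular neighborhood of $S$ on which $\|\nabla f\|$ is bounded below by a positive constant and the third derivatives of $f$ are bounded above, both uniformly. Because $z_{\rm abs}$ lies within distance $\|z_{\rm rel}\|$ of $S$, it belongs to this neighborhood once $\|z_{\rm rel}\|$ is small enough, so the Taylor remainder is $O(\|b\|^3)$ with a constant depending only on $S$ and the final bound is uniform in the pair $(z_1,z_2)$. This mild uniformity check is the only obstacle; no further input is needed.
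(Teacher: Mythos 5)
Your proof is correct and is essentially identical to the paper's: both Taylor expand $f(z_{\rm abs}\pm z_{\rm rel})$ about the midpoint and subtract, so that the even-order terms cancel and only $2\,\nabla f(z_{\rm abs})\cdot z_{\rm rel}$ survives at order below $\|z_{\rm rel}\|^3$. Your added remarks on uniformity of the constants and on not needing $f(z_{\rm abs})=0$ are sensible but do not change the argument.
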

\begin{proof}
	By Taylor expansion
	\begin{align*}
		0 &= f(z_1) = f(z_{\rm abs} + z_{\rm rel}) = f(z_{\rm abs}) + z_{\rm rel}\cdot\nabla f(z_{\rm abs}) + (z_{\rm rel} \otimes z_{\rm rel}):\nabla^2 f(z_{\rm abs}) + O(\|z_{\rm rel}\|^3),\\
			0 &= f(z_2) = f(z_{\rm abs} - z_{\rm rel}) = f(z_{\rm abs}) - z_{\rm rel}\cdot\nabla f(z_{\rm abs}) + (z_{\rm rel} \otimes z_{\rm rel}):\nabla^2 f(z_{\rm abs}) + O(\|z_{\rm rel}\|^3).
	\end{align*}
	Statement then follows by subtracting the equations above.
\end{proof}
\noindent
Finally, we relate the velocity  $\dot{z }_{\rm abs}$ to $\hat{z}_{\rm rel} \times \hat{n}_S( {z }_{\rm abs})$ and  $\dot{z }_{\rm rel}$. From \eqref{zabseqn} and \eqref{zreleqn}, we have
\begin{align*}
 \frac{1}{2\pi\ve^2} (\Gamma_1-\Gamma_2)z _{\rm rel}\times \hat{n}_S(z _{\rm abs}) &= \dot{z }_{\rm abs}(\tau)  -  \frac{1}{2\pi\ve^2}\mathsf{E}_{{z }_{\rm abs} }(z _{\rm abs},z _{\rm rel})\\
 &=
  \dot{z }_{\rm abs} +  O\left( \Gamma\frac{w^3}{\ve^2}\right) + O\left( (\Gamma_1+\Gamma_2)\frac{ w^2}{\ve^2}\right)
  \end{align*}
by \eqref{Eabs1}. We have also
 \begin{align*}
 (\Gamma_1 - \Gamma_2)\dot{z }_{\rm rel}(\tau)&=-(\Gamma_1+\Gamma_2)\dot{z }_{\rm abs}(\tau)  \\
 &\qquad +\frac{1}{2\pi\ve^2}(\Gamma_1 - \Gamma_2)^2 z_{\rm rel}\times(z_{\rm rel} \cdot \nabla \hat{n}_S (z_{\rm abs}))\\
 &\qquad\qquad +\frac{1}{2\pi\ve^2}\Big[   (\Gamma_1 + \Gamma_2)\mathsf{E}_{{z }_{\rm abs} }(z _{\rm abs},z _{\rm rel}) -  (\Gamma_1 - \Gamma_2)\mathsf{E}_{{z }_{\rm rel} }(z _{\rm abs},z _{\rm rel})\Big].
 \end{align*}
 Using the bounds \eqref{Erelest} and \eqref{Eabs1}, we find
 \begin{align}\nonumber
 	\label{rel-abs}
 	(\Gamma_1 - \Gamma_2)\dot{z }_{\rm rel} &=- (\Gamma_1+\Gamma_2)  \dot{z }_{\rm abs}
 	- \frac{1}{2\pi\ve^2}\left( (\Gamma_1 - \Gamma_2)^2 z_{\rm rel}\times(z_{\rm rel} \cdot \nabla \hat{n}_S (z_{\rm abs}))\right)\\
 	& \quad+ O\left(\Gamma^2\frac{ w^4}{\ve^2}\right) +   O\left(\Gamma (\Gamma_1+\Gamma_2) \frac{w^2}{\ve^2}\right).
 \end{align}
 Finally, plugging \eqref{rel-abs} and \eqref{nident} into \eqref{interzeqn}, we obtain
  \begin{align}\nonumber
 	\ddot{z }_{\rm abs}(\tau) 
 	&=  \frac{1}{2\pi\ve^2}(\Gamma_1 - \Gamma_2)  \dot{z }_{\rm rel}(\tau)\times \hat{n}_S( {z }_{\rm abs}) \\  \nonumber
 	&\qquad -\frac{1}{(2\pi)^2\ve^4}(\Gamma_1 - \Gamma_2)^2      \frac{  {\rm Hess} f_{x}({z }_{\rm rel}\times \hat{n}_S, {z }_{\rm rel}\times \hat{n}_S)}{\|\nabla f(x)\|}\hat{n}_S(x) + \mathsf{Error}\\
 	\nonumber
 	&=  -\frac{1}{2\pi\ve^2}(\Gamma_1 + \Gamma_2)  \dot{z }_{\rm abs}(\tau)\times \hat{n}_S( {z }_{\rm abs})  \\
 	\nonumber
 	&\qquad	- \frac{1}{(2\pi)^2\ve^4}\left( (\Gamma_1 - \Gamma_2)^2 z_{\rm rel}\times(z_{\rm rel} \cdot \nabla \hat{n}_S (z_{\rm abs}))\right)\times\hat{n}_S( {z }_{\rm abs})\\
 	&\qquad\qquad-   \frac{  {\rm Hess} f_{x}(\dot{z}_{\rm abs}, \dot{z}_{\rm abs})}{\|\nabla f(x)\|}\hat{n}_S(x) + \mathsf{Error}.
 \end{align}
 Expanding the triple product in the second term, we observe that it can be absorbed into $\mathsf{Error}$:
 \begin{align} \nonumber
 	&\frac{1}{(2\pi)^2\ve^4}\left( (\Gamma_1 - \Gamma_2)^2 z_{\rm rel}\times(z_{\rm rel} \cdot \nabla \hat{n}_S (z_{\rm abs}))\right)\times\hat{n}_S( {z }_{\rm abs})  \\ \nonumber
 	&\qquad =\frac{1}{(2\pi)^2\ve^4}(\Gamma_1 - \Gamma_2)^2 \left( (\hat{n}_S( {z }_{\rm abs})\cdot z_{\rm rel})z_{\rm rel}\cdot\nabla \hat{n}_S (z_{\rm abs}) - (\hat{n}_S (z_{\rm abs}) \cdot(z_{\rm rel} \cdot \nabla \hat{n}_S (z_{\rm abs}))z_{\rm rel})\right)\\
 	& \qquad  = O\left(\Gamma^2 \frac{w^4}{\ve^4}  \right),
 \end{align}
 where we used Lemma \ref{normaltangentlemma} and the fact that $\hat{n}_S \cdot(z_{\rm rel} \cdot \nabla \hat{n}_S) = \frac{1}{2}z_{\rm rel}\cdot\nabla\|\hat{n}_S\|^2 = 0$.
Hence we get
 \begin{align}\nonumber
 \ddot{z }_{\rm abs}(\tau) 
&=  \frac{1}{2\pi\ve^2}(\Gamma_1 - \Gamma_2)  \dot{z }_{\rm rel}(\tau)\times \hat{n}_S( {z }_{\rm abs}) \\  \nonumber
&\qquad -\frac{1}{(2\pi)^2\ve^4}(\Gamma_1 - \Gamma_2)^2      \frac{  {\rm Hess} f_{x}({z }_{\rm rel}\times \hat{n}_S, {z }_{\rm rel}\times \hat{n}_S)}{\|\nabla f(x)\|}\hat{n}_S(x) + \mathsf{Error}\\
&=  -\frac{1}{2\pi\ve^2}(\Gamma_1 + \Gamma_2)  \dot{z }_{\rm abs}(\tau)\times \hat{n}_S( {z }_{\rm abs})  -   \frac{  {\rm Hess} f_{x}(\dot{z}_{\rm abs}, \dot{z}_{\rm abs})}{\|\nabla f(x)\|}\hat{n}_S(x) + \mathsf{Error}.
  \end{align}
 The term involving $ {\rm Hess} f$ in \eqref{zabsgeodeqn} can be expressed in terms of the second fundamental form of the surface via Lemma \ref{secondfundformlem}. This establishes Proposition \ref{mainprop}.
 \end{proof}
   In view of \eqref{weqn}, in the limit $\ve\to 0$, we have $\dot{w}(\tau) = O\left(\frac{1}{\ve^2}\Gamma w^4\right) + O\left(\frac{1}{\ve^2}(\Gamma_1 + \Gamma_2) w^3\right)$ so
  \be
  \lim_{\ve \to 0} \frac{w(\tau)}{\ve}=1, \qquad \forall \tau\in \mathbb{R},
  \ee
  for any scaling (e.g. that of \eqref{scalingrequirements}) such that $\mathsf{Error}(0) \lesssim  \max\Big(\Gamma^2, \ \frac{\Gamma(\Gamma_1 + \Gamma_2)}{\ve} , \  \frac{(\Gamma_1+\Gamma_2)^2}{\ve^2}\Big)\to 0$.
Thus  ${z }_{\rm abs}(\tau)\to X(\tau)$ and $\tau(t)\to t$ as $\ve\to 0$. This completes the proof of Theorem \ref{mainthm}.
  
\newpage

\appendix
\section{Estimates for error terms}\label{appendix}
Here we prove  necessary estimates for validity of Lemma \ref{lem:hard}.

\begin{lemma}\label{bndlem}
 With $\Gamma=  \max\{|\Gamma_1|,|\Gamma_2|\} $ and $ \mathsf{E}_{{z }_{\rm rel} }, \mathsf{E}_{{z }_{\rm abs} }$ as in \eqref{Edefs}, the following bounds hold
	\begin{align}
	\label{Erelest}
	  \mathsf{E}_{{z }_{\rm rel} }(a,b) &=  O(\Gamma \|b\|^4)+O((\Gamma_1+\Gamma_2) \|b\|^3), \\  
	   \label{Eabs1}
		  \mathsf{E}_{{z }_{\rm abs} }(a,b) &= O(\Gamma \|b\|^3)+ O( (\Gamma_1+\Gamma_2)\|b\|^2), \\ \label{Eabs2}
		\nabla_a \mathsf{E}_{{z }_{\rm abs} }(a,b) &= O(\Gamma \|b\|^3)   + O((\Gamma_1+\Gamma_2) \|b\|^2),\\ \label{Eabs3}
		\nabla_b\mathsf{E}_{{z }_{\rm abs} }(a,b) &= O(\Gamma \|b\|^2)  + O((\Gamma_1+\Gamma_2) \|b\|),
	\end{align}
	where implicit constants depend on the $ \|\mathsf{Reg}_S(\cdot,\cdot)\|_{C^{2,\alpha}}$ and the curvature of the surface $\|K(\cdot)\|_{L^\infty(S)}$.
\end{lemma}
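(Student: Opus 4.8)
The plan is to reduce every estimate to a Taylor expansion in the small separation $b$ about the diagonal, keeping careful track of the $\Gamma$-dependence of each coefficient. For this it is convenient to write the regular part \eqref{eq:regular} as $\mathsf{Reg}_S(p,q)=\Gamma_1\Gamma_2 A(p,q)+\tfrac12\Gamma_1^2 B(p)+\tfrac12\Gamma_2^2 B(q)$, where $A(p,q):=-\tfrac{1}{2\pi}\log\tfrac{d_S(p,q)}{\|p-q\|}+R_S(p,q)$ is symmetric in its arguments and $B(p):=R_S(p,p)$; by Lemma \ref{greenlem} both $A$ and $B$ are $C^{2,\alpha}$ with norms controlled by $\|R_S\|_{C^{2,\alpha}}$, while $\hat{n}_S$ is as smooth as $f$ with derivatives controlled by $\|K\|_{L^\infty}$. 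I would expand $\hat{n}_S(a\pm b)-\hat{n}_S(a)=\pm\, b\cdot\nabla\hat{n}_S(a)+\tfrac12(b\otimes b):\nabla^2\hat{n}_S(a)+O(\|b\|^3)$ in the kinematic parts of $F_1,F_2$ in \eqref{f1}--\eqref{f2}, and separately expand the factors $g:=\tfrac{\pi}{\Gamma_1}(\nabla_p\mathsf{Reg}_S)(a+b,a-b)$ and $h:=\tfrac{\pi}{\Gamma_2}(\nabla_q\mathsf{Reg}_S)(a+b,a-b)$ occurring in their regular parts.

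The kinematic contributions are straightforward. Forming $F_2-F_1$ yields the leading term $-(\Gamma_1+\Gamma_2)\,b\times(b\cdot\nabla\hat{n}_S(a))=O((\Gamma_1+\Gamma_2)\|b\|^2)$ together with a next term $\tfrac{\Gamma_1-\Gamma_2}{2}\,b\times((b\otimes b):\nabla^2\hat{n}_S(a))=O(\Gamma\|b\|^3)$, which is the kinematic half of \eqref{Eabs1}. Forming $F_1+F_2$ produces the leading term $(\Gamma_2-\Gamma_1)\,b\times(b\cdot\nabla\hat{n}_S(a))$, which is exactly the term subtracted in the definition \eqref{Edefs} of $\mathsf{E}_{{z}_{\rm rel}}$, so it cancels and leaves $\tfrac{\Gamma_1+\Gamma_2}{2}\,b\times((b\otimes b):\nabla^2\hat{n}_S(a))+O(\Gamma\|b\|^4)=O((\Gamma_1+\Gamma_2)\|b\|^3)+O(\Gamma\|b\|^4)$, the kinematic half of \eqref{Erelest}.

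The main obstacle is the regular part, where the prefactors $\tfrac1{\Gamma_i}$ are dangerous: naively $\tfrac1{\Gamma_i}\nabla\mathsf{Reg}_S=O(\Gamma)$, which would be too weak. The key observation is a cancellation forced by the specific $\Gamma$-weighting of $\mathsf{Reg}_S$. Using the symmetry of $A$ and the diagonal identity $B'(a)=2(\nabla_1 R_S)(a,a)=2(\nabla_1 A)(a,a)$, a short computation shows that both $g$ and $h$ equal $\pi(\Gamma_1+\Gamma_2)(\nabla_1 A)(a,a)$ at $b=0$, so the \emph{effective} coefficient of the leading regular term is $(\Gamma_1+\Gamma_2)$ rather than $\Gamma$; matching the linear-in-$b$ coefficients further gives $h-g=O((\Gamma_1+\Gamma_2)\|b\|)+O(\Gamma\|b\|^2)$. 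Since the regular contribution to $\mathsf{E}_{{z}_{\rm abs}}$ is $\|b\|^2\big(h\times\hat{n}_S(a-b)+g\times\hat{n}_S(a+b)\big)$, it directly inherits $O((\Gamma_1+\Gamma_2)\|b\|^2)+O(\Gamma\|b\|^3)$, completing \eqref{Eabs1}. The regular contribution to $\mathsf{E}_{{z}_{\rm rel}}$ is instead $\|b\|^2\big(h\times\hat{n}_S(a-b)-g\times\hat{n}_S(a+b)\big)$, which vanishes at $b=0$; writing it as $(h-g)\times\hat{n}_S(a-b)+g\times(\hat{n}_S(a-b)-\hat{n}_S(a+b))$ and using $h-g=O((\Gamma_1+\Gamma_2)\|b\|)+O(\Gamma\|b\|^2)$, $g=O(\Gamma_1+\Gamma_2)+O(\Gamma\|b\|)$ and $\hat{n}_S(a-b)-\hat{n}_S(a+b)=O(\|b\|)$ yields $O((\Gamma_1+\Gamma_2)\|b\|^3)+O(\Gamma\|b\|^4)$, completing \eqref{Erelest}.

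Finally, the derivative bounds \eqref{Eabs2} and \eqref{Eabs3} follow by differentiating the same expansions term by term. Each $a$-derivative lands on the smooth geometric factors $\hat{n}_S,\nabla\hat{n}_S$ and the diagonal coefficients $(\nabla_1 A)(a,a),B'(a),\dots$, costing one further bounded derivative of $\hat{n}_S$ or $\mathsf{Reg}_S$ but preserving the $\|b\|$-homogeneity, so \eqref{Eabs2} carries the same powers as \eqref{Eabs1}; each $b$-derivative lowers the $\|b\|$-power by one, giving \eqref{Eabs3}. The cancellations underlying the $(\Gamma_1+\Gamma_2)$-weighting persist under differentiation because they hold order-by-order in the $b$-expansion. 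The only point demanding care is that $\nabla_b$ of the regular part requires two derivatives of $\mathsf{Reg}_S$, which is precisely the $C^{2,\alpha}$ regularity furnished by Lemma \ref{greenlem}.
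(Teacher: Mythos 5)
Your proof is correct and follows essentially the same route as the paper's: separate the kinematic normal-vector terms of $F_1,F_2$ from the $\mathsf{Reg}_S$ terms, Taylor expand about the diagonal, and use the symmetry of the Green/Robin function together with the specific $\Gamma$-weighting of $\mathsf{Reg}_S$ (so that the $1/\Gamma_i$ prefactors recombine) to see that the non-cancelling leading coefficients organize into $\Gamma_1+\Gamma_2$. Your packaging of the log-ratio and Robin terms into a single symmetric $C^{2,\alpha}$ kernel $A$ is just a cleaner bookkeeping of the same order-by-order cancellations the paper carries out explicitly in its terms ${\tt I}$--${\tt VIII}$.
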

 \begin{proof}
 First we note that the terms appearing in \eqref{f1} and \eqref{f2} are explicitly
  \begin{align*}
\frac{1}{\Gamma_1} \nabla_p \mathsf{Reg}_S(p,q)&=- \frac{\Gamma_2}{2\pi}\nabla_1 \log \frac{d_S(p,q)}{\|p-q\|}
+  \Gamma_2  \nabla_1 ( R_S(p,q)- R_S(p,p))
 + (\Gamma_1+\Gamma_2) \nabla_1 R_S(p,p), \\
\frac{1}{\Gamma_2} \nabla_q \mathsf{Reg}_S(p,q)&=- \frac{\Gamma_1}{2\pi}\nabla_2 \log \frac{d_S(p,q)}{\|p-q\|}+  \Gamma_1  \nabla_2  (R_S(p,q)- R_S(q,q))
 +(\Gamma_1+ \Gamma_2) \nabla_2 R_S(q,q).
\end{align*}
since
$\nabla_p R_S(p,p)  = \nabla_1 R_S(p,p)  + \nabla_2 R_S(p,p) = 2\nabla_1 R_S(p,p)= 2\nabla_2 R_S(p,p)$.
Moreover, interpreting $p$ and $q$ as points in the ambient $\mathbb{R}^3$, by Taylor's theorem  (see \cite[Appendix A]{nicolaescu2012random}):
\be
d_S^2(p,q) = \|p-q\|^2 +  \frac{ K(p)}{6} |(p- q)\cdot (p+q)^\perp|^2 \|p-q\|^2 + O( \|p- q\|^5),
\ee
so that
\be
 \log \frac{d_S(p,q)}{\|p-q\|}= \frac{1}{2} \log \frac{d_S^2(p,q)}{\|p-q\|^2} =  \log \Big(1 +  \frac{ K(p)}{6} |(p- q)\cdot (p+q)^\perp|^2 + O( \|p- q\|^3)\Big).
\ee
Below, we will use that that for a function $f(z_1,z_2)$ we have the identities:
   \begin{align*}
(\nabla_1 f)(a+b, a-b) &= \frac{1}{2}\Big(\nabla_a f(a+b, a-b) +   \nabla_b f(a+b, a-b) \Big),\\
(\nabla_2 f)(a+b, a-b) &= \frac{1}{2}\Big(\nabla_a f(a+b, a-b) -   \nabla_b f(a+b, a-b) \Big).
 \end{align*}
We begin analyzing $   \mathsf{E}_{{z }_{\rm rel} }(a,b)$. We have
    \begin{align*}
    \mathsf{E}_{{z }_{\rm rel} }(a,b) &= F_1(a,b)+  F_2(a,b) - (\Gamma_2-\Gamma_1)b\times (b\cdot \nabla\hat{n}_S(a))\\
    &= b\times\Big(\Gamma_2 \hat{n}_S(a+b) -(\Gamma_1+ \Gamma_2) \hat{n}_S(a) +\Gamma_1 \hat{n}_S(a-b) -(\Gamma_2-\Gamma_1)b\cdot \nabla\hat{n}_S(a)\Big)\\
    &\qquad     +\tfrac{\pi}{\Gamma_2}  \|b\|^2 (\nabla_q \mathsf{Reg}_S)(a+b, a-b)\times \hat{n}_S(a-b) -    \tfrac{\pi}{\Gamma_1}  \|b\|^2 (\nabla_p \mathsf{Reg}_S)(a+b, a-b)\times \hat{n}_S( a+b),\\
        &= b\times\Big(\Gamma_2 \hat{n}_S(a+b) -(\Gamma_1+ \Gamma_2) \hat{n}_S(a) +\Gamma_1 \hat{n}_S(a-b) -(\Gamma_2-\Gamma_1)b\cdot \nabla\hat{n}_S(a)\Big)\\
    &\qquad     -\frac{1}{2} \|b\|^2  \left[\Gamma_1\nabla_2 \log \left(\frac{d_S^2(a+b,a-b)}{\|2b\|^2}\right) -\Gamma_2\nabla_1 \log \left(\frac{d_S^2(a+b,a-b)}{\|2b\|^2}\right)  \right]\\
    &\qquad \quad +  \pi \|b\|^2  \Big[ \Gamma_1   (\nabla_2 R_S(a+b,a-b)- \nabla_2 R_S(a-b,a-b))\\
    & \qquad \qquad \qquad \qquad \qquad -  \Gamma_2   (\nabla_1 R_S(a+b,a-b)- \nabla_1 R_S(a+b,a+b)) \Big] \\
    &\qquad \qquad +4 \pi (\Gamma_1+ \Gamma_2) \|b\|^2 \Big[ \nabla_2 R_S(a-b,a-b)- \nabla_1 R_S(a+b,a+b)\Big]\\
    &= {\tt I} + {\tt II} + {\tt III} +  {\tt IV}.
        \end{align*}
        We proceed to estimate term by term.
        We first have (bounds depend on $\|\hat{n}\|_{C^2(S)}$)
        \begin{align*}
      {\tt I} &= b\times\Big(\Gamma_2 \hat{n}_S(a+b) -(\Gamma_1+ \Gamma_2) \hat{n}_S(a) +\Gamma_1 \hat{n}_S(a-b)-(\Gamma_2-\Gamma_1)b\cdot \nabla\hat{n}_S(a)\Big)\\
       &= b\times\Big(\Gamma_2(\hat{n}_S(a+b) -\hat{n}_S(a)-b\cdot \nabla\hat{n}_S(a)) + \Gamma_1(\hat{n}_S(a-b) -\hat{n}_S(a)+b\cdot \nabla\hat{n}_S(a))\Big)\\
              &=O( (\Gamma_1 + \Gamma_2) \|b\|^3) + O(\Gamma\|b\|^4).
        \end{align*}
 Next, for term $ {\tt II}$, we manipulate
        \begin{align} \nonumber
        \Gamma_1\nabla_2 \log &\left(\frac{d_S^2(a+b,a-b)}{\|2b\|^2}\right) -\Gamma_2\nabla_1 \log \left(\frac{d_S^2(a+b,a-b)}{\|2b\|^2}\right) \\ \nonumber
        &=  \Gamma_1 \nabla_a \log \left(\frac{d_S^2(a+b,a-b)}{\|2b\|^2}\right) - (\Gamma_1+ \Gamma_2)\nabla_1 \log \left(\frac{d_S^2(a+b,a-b)}{\|2b\|^2}\right)     \\
        &= O(\Gamma \|b\|^2)+ O((\Gamma_1+ \Gamma_2)\|b\|) .\label{logbnd}
        \end{align}
        Thus we find that
        \be
        {\tt II} = O((\Gamma_1+\Gamma_2)\|b\|^3)+ O(\Gamma \|b\|^4).
        \ee
        Finally, Taylor expanding we have
        \begin{align*}
         \nabla_1 R_S(a+b,a-b) &=    \nabla_1R_S(a,a)+ b\cdot \nabla_1\nabla_1 R_S(a,a) - b\cdot \nabla_2\nabla_1 R_S(a,a)   + O(\|b\|^2),\\
        \nabla_2 R_S(a+b,a-b) &=    \nabla_2 R_S(a,a)+ b\cdot \nabla_1\nabla_2 R_S(a,a) - b\cdot \nabla_2\nabla_2 R_S(a,a) + O(\|b\|^2),\\
              \nabla_1 R_S(a+b,a+b) &=   \nabla_1 R_S(a,a)+ b\cdot \nabla_1\nabla_1 R_S(a,a) +b\cdot \nabla_2\nabla_1 R_S(a,a)  + O(\|b\|^2),\\
              \nabla_2 R_S(a-b,a-b) &=   \nabla_2 R_S(a,a)- b\cdot \nabla_1\nabla_2 R_S(a,a) - b\cdot \nabla_2\nabla_2 R_S(a,a) + O(\|b\|^2),
        \end{align*}
        Thus we have that the differences satisfy
      \begin{align}\label{d2iden1}
         \nabla_2 R_S(a+b,a-b)- \nabla_2 R_S(a-b,a-b) &= 2 b\cdot \nabla_1\nabla_2 R_S(a,a)+ O(\|b\|^2),\\ \label{d1iden1}
         \nabla_1 R_S(a+b,a-b)- \nabla_1 R_S(a+b,a+b) &= -2 b\cdot \nabla_2\nabla_1 R_S(a,a) + O(\|b\|^2),\\ \label{d2iden1}
         \nabla_2 R_S(a-b,a-b)- \nabla_1 R_S(a+b,a+b) &= O(\|b\|),
       \end{align}
       where we used symmetry $ \nabla_1 R_S(a,a)= \nabla_2 R_S(a,a)$.
Thus, we have obtained:
      \begin{align}
       {\tt III} &=   2\pi \|b\|^2 (\Gamma_1+\Gamma_2) b\cdot \nabla_2\nabla_1 R_S(a,a)+ O(\Gamma\|b\|^4) = O((\Gamma_1+\Gamma_2)\|b\|^3)+ O(\Gamma \|b\|^4), \\
        {\tt IV} &=  O((\Gamma_1+\Gamma_2)\|b\|^3).
      \end{align}

Next we estimate $ \mathsf{E}_{{z }_{\rm abs} }(a,b)$.  We write
        \begin{align*}
        \mathsf{E}_{{z }_{\rm abs} }(a,b) &=  F_2(a,b)-F_1(a,b)\\
        &= \Gamma_1b\times\Big( \hat{n}_S(a-b)-2\hat{n}_S(a)+\hat{n}_S(a+b) \Big) -  (\Gamma_1+\Gamma_2)b\times\Big( \hat{n}_S(a+b) - \hat{n}_S(a) \Big)  \\
        &\quad +   \tfrac{\pi}{\Gamma_2}  \|b\|^2 (\nabla_q \mathsf{Reg}_S)(a+b, a-b)\times \hat{n}_S(a-b)+ \tfrac{\pi}{\Gamma_1}  \|b\|^2 (\nabla_p \mathsf{Reg}_S)(a+b, a-b)\times \hat{n}_S( a+b)\\
      &= \Gamma_1b\times\Big( \hat{n}_S(a-b)-2\hat{n}_S(a)+\hat{n}_S(a+b) \Big) -  (\Gamma_1+\Gamma_2)b\times\Big( \hat{n}_S(a+b) - \hat{n}_S(a) \Big)  \\
    &\qquad     -\frac{1}{2} \|b\|^2  \left[\Gamma_1\nabla_2 \log \left(\frac{d_S^2(a+b,a-b)}{\|b\|^2}\right) +\Gamma_2\nabla_1 \log \left(\frac{d_S^2(a+b,a-b)}{\|b\|^2}\right)  \right]\\
    &\qquad \quad +  \pi \|b\|^2  \Big[ \Gamma_1   (\nabla_2 R_S(a+b,a-b)- \nabla_2 R_S(a-b,a-b))\\
    & \qquad \qquad \qquad \qquad \qquad + \Gamma_2   (\nabla_1 R_S(a+b,a-b)- \nabla_1 R_S(a+b,a+b)) \Big] \\
    &\qquad \qquad +4 \pi (\Gamma_1+ \Gamma_2) \|b\|^2 \Big[ \nabla_2 R_S(a-b,a-b)+ \nabla_1 R_S(a+b,a+b)\Big]\\
    &= {\tt V} + {\tt VI} + {\tt VII} +  {\tt VIII}.
    \end{align*}
  We have
  \begin{align}\label{nbd2d}
   |\Gamma_1b\times\Big( \hat{n}_S(a-b)-2\hat{n}_S(a)+\hat{n}_S(a+b) \Big) |&\leq \Gamma \|\hat{n}\|_{C^2(S)} \|b\|^3,\\
   | (\Gamma_1+\Gamma_2)b\times\Big( \hat{n}_S(a+b) - \hat{n}_S(a) \Big)| &\leq (\Gamma_1+\Gamma_2)  \|\hat{n}\|_{C^1(S)} \|b\|^2.
  \end{align}
  Thus we find 
  \be
  {\tt V} =  O((\Gamma_1+\Gamma_2)\|b\|^2)+ O(\Gamma \|b\|^3).
  \ee
The other terms are easily estimated by Taylor expansion to satisfy
\begin{align}
    {\tt VI},\   {\tt VII},   \     {\tt VIII}  &=  O((\Gamma_1+\Gamma_2)\|b\|^2)+ O(\Gamma \|b\|^3).
\end{align}
  The stated bound \eqref{Eabs1} follows.  The derivative estimates \eqref{Eabs2} and \eqref{Eabs3} follow similarly.
  \end{proof}

 \subsection*{Acknowledgments}  We are grateful to M. Khuri for fruitful discussions. TDD and DG were partially supported by the NSF DMS-2106233 grant and NSF CAREER award \#2235395.  BK  was partially supported by an NSERC Discovery Grant.
 
 \vspace{-2mm}
 \subsection*{Data Availability Statement}  No data was created or analyzed in this study.

\bibliographystyle{amsplain}
\bibliography{magnetic_dipole}

\end{document}